\newcommand\BibTeX{{\rmfamily B\kern-.05em \textsc{i\kern-.025em b}\kern-.08em
T\kern-.1667em\lower.7ex\hbox{E}\kern-.125emX}}
\newtheorem{remark}{Remark}
\newtheorem{assumption}{Assumption}
\newtheorem{proposition}{Proposition}
\def\begequarr{\begin{eqnarray}}
\def\endequarr{\end{eqnarray}}
\def\begequarrs{\begin{eqnarray*}}
\def\endequarrs{\end{eqnarray*}}
\def\begarr{\begin{array}}
\def\endarr{\end{array}}
\def\begequ{\begin{equation}}
\def\endequ{\end{equation}}
\def\lab{\label}
\def\begdes{\begin{description}}
\def\enddes{\end{description}}
\def\begenu{\begin{enumerate}}
\def\begite{\begin{itemize}}
\def\endite{\end{itemize}}
\def\endenu{\end{enumerate}}
\def\lef[{\left[\begin{array}}
\def\rig]{\end{array}\right]}
\def\qed{\hfill$\Box \Box \Box$}
\def\begcen{\begin{center}}
\def\endcen{\end{center}}
\def\begrem{\begin{remark}\rm}
\def\endrem{\end{remark}}
\def\begmat#1{\begin{bmatrix}#1\end{bmatrix}}
\def\begali#1{\begin{align}{#1}\end{align}}
\def\begalis#1{\begin{align*}{#1}\end{align*}}
\def\cale{{\cal E}}
\def\calh{{\cal H}}
\def\calc{{\cal C}}
\def\calm{{\cal M}}
\def\calr{{\cal R}}
\def\calj{{\cal J}}
\def\cale{{\cal E}}
\def\call{{\cal L}}
\def\cala{{\cal A}}
\def\calj{{\cal J}}
\def\bq{\ell}
\def\L2e{{\cal L}_{2e}}
\def\rea{\mathbb{R}}
\def\diag{\mbox{diag}}
\def\col{\mbox{col}}
\def\hal{{1 \over 2}}
\def\diag{\mbox{diag}}
\def\rank{\mbox{rank}}
\def\bbs{{\mathbb S}}
\def\bfx{{\mathbf x}}
\def\bq{{\mathbf q}}
\def\L2e{{\cal L}_{2e}}
\def\rea{\mathbb{R}}
\def\diag{\mbox{diag}}
\def\col{\mbox{col}}
\def\hal{{1 \over 2}}
\def\diag{\mbox{diag}}
\def\bfqy{{\bf q}_y}
\def\yn{{\bf q}_N}
\def\bfqn{{\bf q}_N}
\def\dotbfqy{\dot{\bf q}_y}
\def\dotbfqn{\dot{\bf q}_N}
\def\IJRNLC{{\it Int. J. of Robust and Nonlinear Control}}
\def\TAC{{\it IEEE Trans. Automatic Control}}
\def\SCL{{\it Systems \& Control Letters}}
\def\AUT{{\it Automatica}}
\def\CST{{\it IEEE Trans. Control Systems Technology}}
\def\SIAM{{\it SIAM J. Control and Optimization}}
\begin{document}

\runningheads{B. Yi~\MakeLowercase{\it et al}.}{Path following via I\&I orbital stabilization}

\title{Path following of a class of underactuated mechanical systems via immersion and invariance-based orbital stabilization\footnotemark[2]}

\author{Bowen Yi$^1$\corrauth, Romeo Ortega$^{2,3}$, Ian R. Manchester$^1$, Houria Siguerdidjane$^2$}

\address{\center
{\rm 1.} Australian Centre for Field Robotics \& Sydney Institute for Robotics and Intelligent Systems,\\
The University of Sydney, Sydney, NSW 2006, Australia
\\
{\rm 2.} Laboratoire des Signaux et Syst\`emes, CNRS-CentraleSup\'elec, Gif-sur-Yvette 91192, France
\\
{\rm 3.} Department of Control Systems and Informatics, ITMO University, St. Petersburg 197101, Russia
}

\corraddr{Australian Centre for Field Robotics \& Sydney Institute for Robotics and Intelligent Systems, The University of Sydney, NSW 2006, Australia (\texttt{bowen.yi@sydney.edu.au})}

\begin{abstract}
This paper aims to provide a new problem formulation of path following for mechanical systems without time parameterization nor guidance laws, namely, we express the control objective as an \emph{orbital stabilization} problem. It is shown that, it is possible to adapt the  immersion and invariance technique to design \emph{static} state-feedback controllers that solve the problem. In particular, we select the target dynamics adopting the recently introduced Mexican sombrero energy assignment method. To demonstrate the effectiveness of the proposed method we apply it to control underactuated marine surface vessels.
\end{abstract}

\keywords{path following, orbital stabilization, immersion and invariance, energy shaping}

\footnotetext[2]{This paper is supported by the Australian Research Council, and by the Government of the Russian Federation (074U01), the Ministry of Education and Science of Russian Federation.}

\maketitle

\section{Introduction}
Tracking a geometric path in mechanical systems is a task often encountered in various application fields, \emph{e.g.}, robots, aerospace and autonomous vehicles. There are two approaches to formulate this problem, depending on whether the predefined path is parameterized by time or not, and they are known as \emph{trajectory tracking} or \emph{path following}, respectively. The former approach largely dominates the control literature and it has been extensively studied. Unfortunately, for \emph{non-minimum phase} systems, that are common in these applications, the achievable performance has a fundamental limitation, first identified in \cite{AGUetal}. It is possible to circumvent this limitation removing the time parameterizaton of the geometric path, leading to the path following approach---where the reparameterization of the time provides an additional degree of freedom to assign the evolution along the path and achieve the desired objective.

Two approaches to solve the path following problem (PFP) have been reported in the literature. The first approach controls the evolution of the reference point on the desired path, with the aid of \emph{guidance laws} that, in essence, allow us to reformulate the problem as a tracking control task. This technical route dominates the publications on the topic of path following. Some guidance laws, \emph{e.g.}, pursuit guidance, light-of-sight and virtual targets, have proven successful in various kinds of applications \cite{FOS}. The second class of methods, which includes the generation of virtual holonomic constraints and the principle of transverse feedback linearization, are based on \emph{set stabilization} \cite{CONetal,HLAetal,MORetal}. In these works, the dynamics are partitioned into transversal dynamics and tangential dynamics, which are controlled separately via feedback linearization. Besides the obvious lack of robustness of feedback linearization, the resulting designs are intrinsically local due to the existence of a singularity in the input matrix of the transversal dynamics.

In this paper, we reformulate the PFP adopting an \emph{orbital stabilization} approach, which is a particular kind of set stabilization. The main contributions of the paper are twofold.

\begin{itemize}
\setlength{\itemsep}{7pt}
  \item[{\bf C1}] The new reformulation of the PFP is solved proposing an extension to the recently developed \emph{Immersion and Invariance} (I\&I) orbital stabilisation method reported in \cite{ORTrnc} and using the Mexican sombrero energy-shaping construction of \cite{YIetal2} to solve the I\&I problem. The resulting controller is a globally defined static state-feedback that, without the appeal of any guidance laws, guarantees almost global convergence.
  \item[{\bf C2}] The result is applied to the benchmark problem of underactuated marine surface vessels. It is shown that the availability of some free functions in the feedback law allows us to ``shape'' the dynamical behaviour, playing a similar role as guidance laws.
\end{itemize}

The paper is organized as follows. The formulation of the PFP and its classical approach are recalled in Section \ref{sec2}. Section \ref{sec3} gives an I\&I formulation of orbital stabilization. In Section \ref{sec4} we select a target dynamics that is suitable for the PFP. The main result, namely, a constructive procedure to solve the PFP, is presented in Section \ref{sec5}. This is followed by a case study of marine surface vessels in Section \ref{sec6}. The paper is wrapped up with some concluding remarks in Section \ref{sec7}. To enhance readability, some of the proofs are given in appendices at the end of the paper.

 \ \\ \
\textbf{Notations.} $I_n$ is the $n \times n$ identity matrix. $\bbs$ denotes the unit circle, that is, $\bbs:= \rea$ mod $2\pi$. For $x\in\rea^n$, $A \in \rea^{n \times n}$ positive definite and a set $\cale \subset \rea^n$, we denote $|x|^2= x^\top x$, $\|x\|^2_A = x^\top Ax$, and $\|x\|_\cale:= \inf_{y\in\cale} |x-y|$. All mappings are assumed smooth enough. Given $f: \rea^n \to \rea$ we define the differential operator $\nabla f := \left({\partial f / \partial x}\right)^\top$. For a tall, full rank, matrix $B \in \rea^{n\times m}$, we define $B^\perp \in \rea^{(n-m) \times n}$ as a full-rank left-annihilator of $B$ and $B^\dagger$ as its Moore-Penrose pseudoinverse. When clear from the context the arguments of the mappings are omitted.

%
\section{Problem Formulation and Classical Approach}
\label{sec2}
%
\subsection{The path following problem}
\label{subsec21}
We consider $n$ degrees-of-freedom mechanical systems described in port-Hamiltonian (pH) form as\footnote{To avoid cluttering, and with some obvious abuse of notation, we sometimes mix $x$ and $(q,p)$.}
\begin{equation}\label{pH}
  \dot x
  =
  \begmat{~ 0_{n\times n} & A(q) ~~\\ ~ - A^\top(q) & -R(x)~~} \nabla H(x)
  +
  \begmat{~0_{n \times m}~\\ ~G(q)~}u,
\end{equation}
where $x:=\col(q,p)$, with $q \in \rea^n, p \in \rea^n$ the generalized configuration state and its momenta, respectively, $u \in \rea^m$ is the control input with $2\le m <n$, the matrix $R: \rea^{2n} \to \rea^{n\times n}$ satisfies $R(x) + R^\top(x) \ge 0$, the input matrix $G: \rea^n \to \rea^{n\times m}$ and $A: \rea^{n} \to \rea^{n\times n}$ are full rank. The  total energy function $H: \rea^n \times \rea^n \to \rea$ is given by
$$
H(q,p) := \hal p^\top M^{-1}(q) p + U(q)
$$
with the generalized inertia matrix $M: \rea^n \to \rea^{n\times n}_{>0}$ and $U:\rea^n \to \rea$ the potential energy function.

In order to be able to cover some important practical examples, depending on the selection of the matrix  $A(q)$, the momenta $p$ may be defined in the inertia or the body-fixed frames. Particularly, $A(q) =I_n$ implies that $p$ is defined in the inertia frame. Another popular way of modeling is done in the body-fixed coordinate, see \emph{e.g.} \cite{FOS}, in which $A(q)$ is a rotation matrix in $\text{SO}(n)$, yielding a \emph{constant} matrix $M$.

The output of the system to be regulated is defined, via the mapping $h: \rea^{n} \to\rea^{n_y}$, as
\begin{equation}
\label{bfqy}
{\mathbf q}_y = h(q), \quad \rank \{\nabla h(q)\} =n_y,
\end{equation}
where $h(q)$ is determined by specific control tasks. To streamline the main underlying idea we select, without loss of generality,  a \emph{planar closed} path, that is, $n_y = 2$.\footnote{We underscore  that the proposed framework is also applicable to \emph{high-dimensional} closed, non-intersecting paths by selecting the output mapping $h(q)$ properly, a fact that will be discussed in Remark {\bf R12} below.}

To streamline the formulation of the PFP that we adopt in the paper we need the following preliminaries. We are interested in following a regular Jordan curve, which is a plane curve topologically equivalent to (an homeomorphic image of) the unit circle. The desired path $\calc$ has the length $L$ in the output space with smooth parameterization $\sigma: [0,L] \to \rea^2$ and its image is ${\tt Im}(\sigma) = \calc$. A Jordan curve has the following implicit form as
$$
\calc:= \{\bfqy \in {\mathcal X} ~|~ \Phi(\bfqy) =0 \},
$$
where $\Phi: {\mathcal X} \to \rea$ is a smooth function, with $\nabla \Phi\neq 0$ on the open set ${\mathcal X} \subset \rea^2$ containing the desired path $\calc$. In view of the smoothness assumption, the desired Jordan curve $\calc$ is diffeomorphic to the unit circle $\bbs$.

We need the following additional assumption.

\begin{assumption}\rm \label{ass:jordan_curve} Given the Jordan curve $\Phi(\bfqy) = 0$, with $\Phi$ a regular function, the set
\begequ
\label{Omega}
\Omega : =  \{\bfqy \in \rea^2 | \nabla \Phi(\bfqy) = 0\}
\endequ
has {only} a finite number of {\em isolated} points and it guarantees
$$
\Phi(a) \Phi(b) > 0, \quad \forall a,~b \in \Omega.
$$
\end{assumption}

\ \\ \
\textbf{Formulation of the PFP} \cite{AGUetal} For the system \eqref{pH} and a desired smooth path characterized by the Jordan curve $\Phi(\bfqy)=0$, design a controller that achieves:

\begin{itemize}
\item[\bf P1] \emph{boundedness}: the states $x$ are bounded for all $ x(0)$;
\item[\bf P2] \emph{convergence and invariance}: the states $x$ will converge to the path set
$$
\lim_{t\to\infty}\| x(t) \|_{\calc_x} =0,
$$
where we defined
\begequ
\lab{cx}
\calc_x:=\{x\in\rea^{2n} ~ | ~ \Phi(h(q))=0\},
\endequ
which  is an invariant set;
\item[\bf P3] \emph{forward motion}: $\dot{x} \neq 0$ for all $t\ge 0$ and $x\in \calc_x$.
\end{itemize}

We underscore here that \textbf{P3} is the key property distinguishing the path following problem from the \emph{set stabilization} problem.

\subsection{Conventional approach to the problem}
\label{sec22}
A well-known approach to solve the PFP is to translate it into \emph{tracking} using guidance laws or motion generators. This is motivated by the fact that the desired path $\calc$ can be parameterized as
$
\bfqy^d = \sigma(\theta),
$
with $\theta \in [0,L]$ and a mapping $\sigma :[0,L] \to \rea^2$. In the mainstream methods, the variable $\theta$, then, propagates according to the motion generator dynamics
\begequ
\dot{\theta} = \Gamma(\theta,x), \quad \theta(0) = \theta_0.
\endequ
In this way, the PFP is translated into a standard tracking problem, that is, designing a feedback law such that
$$
\lim_{t\to\infty} |\bfqy(t) - \sigma(\theta(t))| =0.
$$
The free mapping $\Gamma(\theta,x)$ is then regraded as an \emph{additional control input}.

{The approach described above is well-established and it, essentially, has become the practical standard. However, the technique has two shortcomings that are openly recognized in the literature. First, the propagation variable $\theta$ projected on the path may admit {\em infinite} solutions for a {\em closed} curve, making it more suitable for open curves \cite{FOSetal,SAM}. Second, it does not ensure invariance on the pre-defined path \cite{HLAetal}, {\em i.e.},
$$
\Phi(h(q(t_0)) )= 0 \quad \not \Rightarrow \quad
\Phi(h(q(t)))=0 , ~ \forall t> t_0.
$$
}
Overcoming these two drawbacks is one of the motivations of this paper.

%
\section{An I\&I Approach to Orbital Stabilization}
\label{sec3}
%
In this paper we solve the PFP formulating it as an orbital stabilization problem similar to the one studied in \cite{ORTrnc}---where an extension to the I\&I technique \cite{ASTORT} originally applied for equilibria stabilization was proposed. We recall that this problem is to generate stable periodic solutions for the nonlinear system
\begequ
\label{eq:NL}
 \dot{x} = f(x) + g(x) u,
\endequ
with state $x \in \rea^{2n}$, input $u \in \rea^m$, with $m < 2n$, and $g(x)$ full rank. More precisely, we aim at defining a mapping $\hat{u} : \rea^{2n} \to \rea^m$ such that the closed loop is orbitally attractive. That is, for all $t\ge 0$
\begalis{
\dot{X}(t)  & = f(X(t)) + g(X(t))\hat{u}(X(t))\\
X(t) & = X(t+T),
}
and the set defined by its associated closed orbit
$$
\{x \in \rea^{2n}| x(t)= X(t),\; t\in [0,T]\},
$$
is attractive.

To apply this approach to the PFP it is necessary to extend \cite[Proposition 1]{ORTrnc} to the case where the target dynamics---and, consequently, the mapping defining the immersion manifold---are {\em functions of the state} of the system to be controlled. For ease of reference we recall below the main result of \cite{ORTrnc}.

\begin{proposition}
\label{prop:I&I}\rm
Consider the system \eqref{eq:NL}. Assume we can find mappings
\begali{
 \beta :\rea^{2n}\times\rea^{2n-p} \to \rea^m \quad  \alpha: \rea^p \to \rea^p
\lab{iimap}
 \quad \pi: \rea^p \to \rea^{2n}   \quad \phi: \rea^{2n}\to \rea^{{2n}-p}
}
with $p < 2n$, such that the following assumptions hold.
\begin{itemize}
    \item[\bf A1] (Target oscillator) The dynamical system
    \begequ
    \label{tardyn}
     \dot{\xi} = \alpha(\xi)
    \endequ
has non-trivial, periodic solutions $\xi_\star (t)=\xi_\star (t+T),\;\forall t \geq 0$, which are parameterized by the initial conditions $\xi(0)$.
    \item[\bf A2] (Immersion condition) For all $\xi\in\rea^p$,
    \begequ
    \label{fbi}
     g^\perp(\pi(\xi)) \left[f(\pi(\xi)) - \nabla \pi^\top(\xi) \alpha(\xi)\right]=0.
    \endequ

    \item[\bf A3] (Implicit manifold condition) The following set identity holds
    \begequ
    \label{impman}
    \begin{aligned}
     \{x \in\rea^{2n} ~|~ \phi(x)=0\} =
     \{x \in\rea^{2n} ~|~ x=\pi(\xi)\}.
     \end{aligned}
    \endequ
    \item[\bf A4] (Attractivity and boundedness condition) All the trajectories of the (extended) system
    \begequ
    \label{closed-loop}
     \begin{aligned}
          \dot{z} & = \nabla \phi^\top(x) [f(x) + g(x)\beta(x,z)] \\
          \dot{x} & = f(x) + g(x)\beta(x,z)
     \end{aligned}
    \endequ
    with $z(0)=\phi(x(0))$ and the constraint
    $$
    \beta(\pi(\xi), 0) = c(\pi(\xi))
    $$
    where
\begin{equation}
\label{eq:c}
    c(\pi(\xi)) := g^\dagger (\pi(\xi)) [\nabla \pi^\top(\xi)\alpha(\xi) - f(\pi(\xi)) ],
\end{equation}
    are bounded and satisfy
    $
    \lim_{t\to\infty} z(t)=0.
    $
\end{itemize}
Then, the feedback law $u = \beta(x, \phi(x))$ ensures that $x_\star (t)=\pi(\xi_\star (t))$ is orbitally attractive.
\qed
\end{proposition}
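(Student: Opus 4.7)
The proof essentially follows the I\&I paradigm: identify the invariant manifold, certify that its reduced dynamics reproduce the target oscillator, and prove attractivity through the off-manifold coordinate. I would organise it as follows.

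First I would establish invariance of the candidate manifold
$$
\mathcal{M} := \{x\in\rea^{2n} \mid \phi(x)=0\} = \{x\in\rea^{2n}\mid x=\pi(\xi),\;\xi\in\rea^p\},
$$
where the equality is given by \textbf{A3}. Under the feedback $u=\beta(x,\phi(x))$, on $\mathcal{M}$ we have $\phi(x)=0$, so the constraint in \textbf{A4} gives $\beta(\pi(\xi),0)=c(\pi(\xi))$. Substituting and using the Moore--Penrose identity, I would then argue that \textbf{A2} is precisely the statement that $f(\pi(\xi))-\nabla\pi^\top(\xi)\alpha(\xi)$ lies in $\mathrm{Im}\,g(\pi(\xi))$, so that $g(\pi(\xi))g^\dagger(\pi(\xi))$ acts as the identity on that vector. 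This yields, for $x=\pi(\xi)$,
$$
\dot{x} = f(\pi(\xi)) + g(\pi(\xi))c(\pi(\xi)) = \nabla\pi^\top(\xi)\alpha(\xi).
$$
In particular, $\mathcal{M}$ is forward invariant and, because $\pi$ is an immersion, the motion on $\mathcal{M}$ is conjugate through $\pi$ to the target oscillator $\dot\xi=\alpha(\xi)$. By \textbf{A1} this contains the non-trivial periodic orbit $\xi_\star(t)$, hence $x_\star(t)=\pi(\xi_\star(t))$ is a closed orbit of the closed-loop system lying in $\mathcal{M}$.

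Next I would handle attractivity. Define the off-manifold coordinate $z:=\phi(x)$, so that the closed loop with $u=\beta(x,\phi(x))$ together with $z(0)=\phi(x(0))$ obeys exactly the extended system in \textbf{A4}. Hypothesis \textbf{A4} then delivers directly boundedness of trajectories and $\lim_{t\to\infty}z(t)=0$. Combining this with the set identity in \textbf{A3} gives $\lim_{t\to\infty}\|x(t)\|_{\mathcal{M}}=0$, so every trajectory is asymptotically attracted to $\mathcal{M}$.

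Finally, I would conclude orbital attractivity of $x_\star(t)$. Since $x(t)$ converges to $\mathcal{M}$, and on $\mathcal{M}$ the motion is governed by $\dot\xi=\alpha(\xi)$ with periodic orbit $\xi_\star$ supplied by \textbf{A1}, a standard limit-set argument (the $\omega$-limit set of any bounded trajectory is nonempty, compact, invariant, and contained in $\mathcal{M}$) shows that the $\omega$-limit of $x(\cdot)$ is a union of orbits of the reduced target dynamics, hence of the closed orbit $\{x_\star(t):t\in[0,T]\}$. Thus $x_\star(t)$ is orbitally attractive, completing the proof.

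The main obstacle I anticipate is the very last step: going from convergence to the manifold to convergence to the specific orbit rather than to some other orbit of $\dot\xi=\alpha(\xi)$. This has to be absorbed into either a uniqueness/structure property of the target oscillator provided by \textbf{A1} or a more careful invariance argument; in the I\&I tradition of \cite{ORTrnc} it is customary to phrase the conclusion as attractivity of the image of $\pi$ under the flow generated by $\alpha$, which is what the statement really asserts. All remaining work (algebraic manipulation in the invariance step and verification that the $(z,x)$ coordinates are well-defined globally) is routine given \textbf{A2}--\textbf{A4}.
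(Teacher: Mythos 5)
The paper offers no proof of this proposition: it is recalled verbatim from \cite{ORTrnc} (hence the immediate \qed), so your reconstruction can only be compared with the standard I\&I argument of \cite{ASTORT,ORTrnc} --- and it does match it. Your two main steps are sound: on the set $\{x=\pi(\xi)\}$ the constraint $\beta(\pi(\xi),0)=c(\pi(\xi))$ together with \textbf{A2} (which says $f(\pi(\xi))-\nabla\pi^\top(\xi)\alpha(\xi)\in\mathrm{Im}\,g(\pi(\xi))$, so $gg^\dagger$ acts as the identity on it) gives $\dot x=\nabla\pi^\top(\xi)\alpha(\xi)$, whence $x_\star(t)=\pi(\xi_\star(t))$ is a closed orbit of the closed loop (note you do not even need $\pi$ to be an immersion for this, only the chain rule and uniqueness of solutions); and \textbf{A4} with $z=\phi(x)$, boundedness of $x$ and continuity of $\phi$, combined with \textbf{A3}, gives $\lim_{t\to\infty}\|x(t)\|_{\{x=\pi(\xi)\}}=0$.

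The obstacle you flag at the end is real and cannot be removed from \textbf{A1}--\textbf{A4} alone: the $\omega$-limit set of a bounded trajectory is an invariant subset of the manifold, hence a union of orbits of the target dynamics, but nothing forces it to be the particular orbit through $\xi_\star(0)$ (take $\alpha$ a harmonic oscillator, whose periodic orbits foliate the plane). So ``orbitally attractive'' in the recalled statement must be read as attractivity of the invariant set $\{x=\pi(\xi)\}$, exactly as you suggest; your intermediate claim that the $\omega$-limit set is contained in the closed orbit $\{x_\star(t)\}$ is not justified and should be dropped. The present paper is consistent with this reading: its own linear example compares with the controller \eqref{controller:lti3} of \cite{ORTrnc} precisely to exhibit that the steady-state orbit there depends on the initial conditions, and it closes the gap not in Proposition \ref{prop:I&I} but by choosing the target dynamics of Proposition \ref{prop:oscillator}, whose desired orbit is (almost globally) exponentially attractive within the target flow, so that convergence to the manifold plus convergence inside the target dynamics yields convergence to the desired path.
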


Notice that \eqref{fbi} and \eqref{eq:c} are equivalent to
$$
f(\pi(\xi)) + g(\pi(\xi)) = \nabla \pi^\top(\xi)\alpha(\xi),
$$
which is the necessary and sufficient condition for invariance of the set \eqref{impman}. Also, as shown in \cite[Proposition 3]{WANetal} the attractivity condition \textbf{A4} can be replaced by a contraction condition---see also \cite[Corollary 2]{MANSLOtac}.

\begin{figure}[h]
	\begin{center}
	\includegraphics[width=0.48\textwidth]{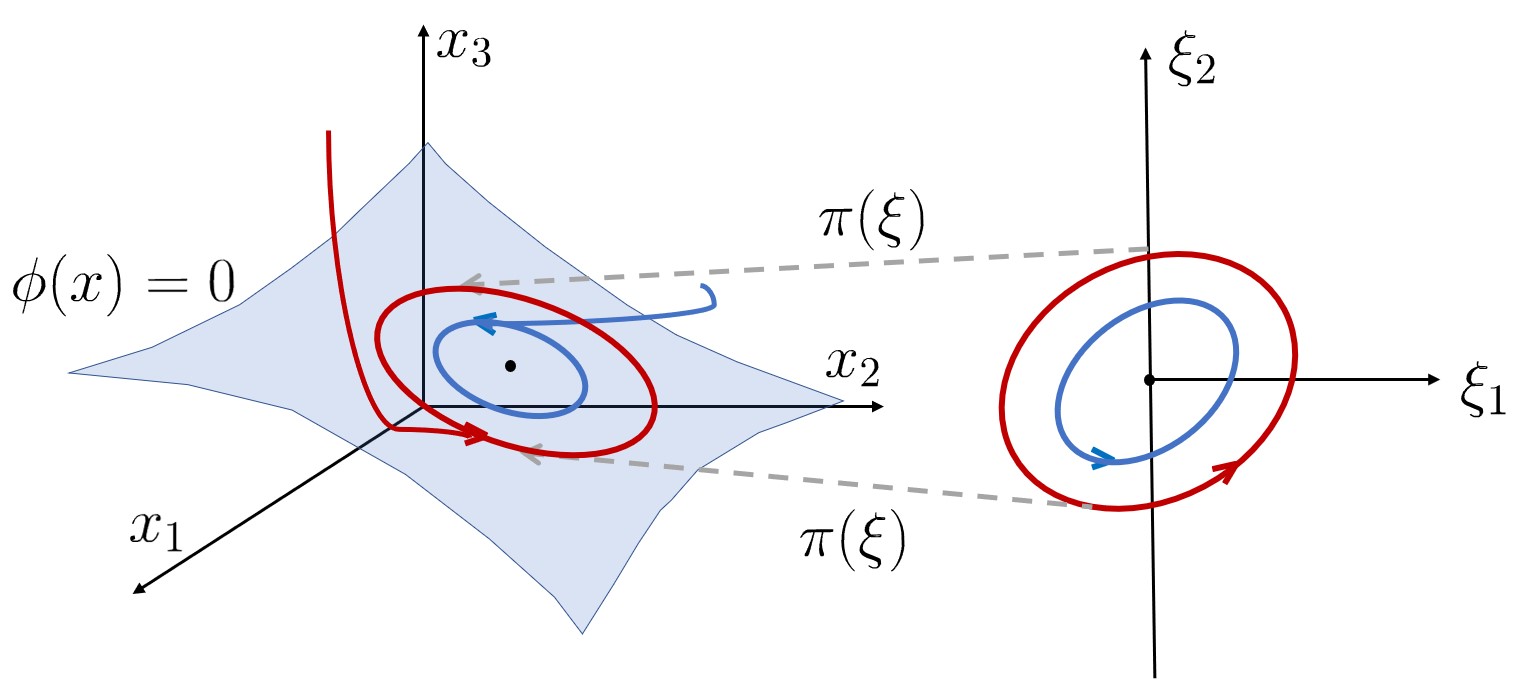}
          \includegraphics[width=0.25\textwidth]{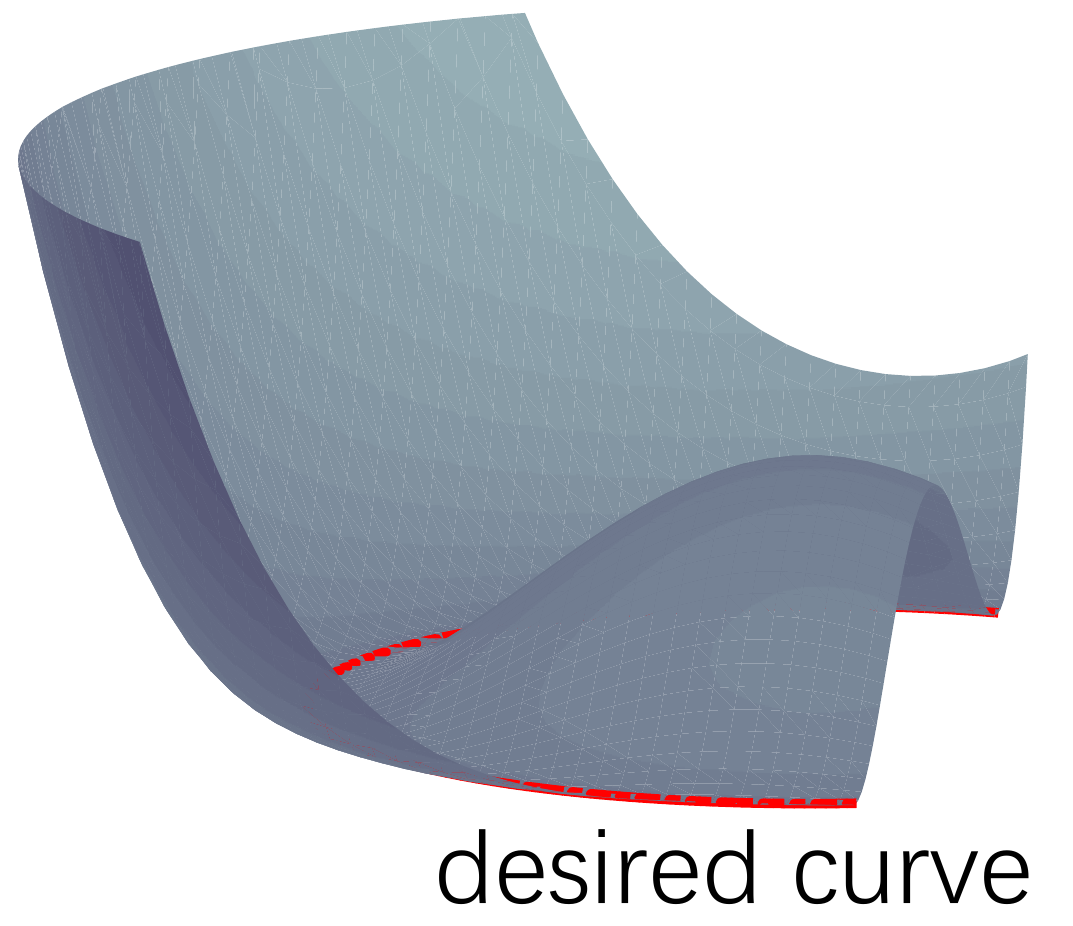}
	\caption{(a) Orbit in the target dynamics and mapping to the system state; (b) the Hamiltonian function with the target orbit.}
	\label{fig:ORTrnc}
  \end{center}
\end{figure}

%
\section{Definition of the Target Dynamics}
\label{sec4}
%
A first, key step for the application of the I\&I procedure to solve the PFP is the suitable selection of the target dynamics, which is carried-out in this section.
\subsection{Target dynamics oscillator}
\label{subsec41}
In order to obtain an attractive target oscillator, we utilize the Mexican sombrero energy assignment method proposed in \cite{YIetal2}, whose underlying mechanism is to assign an energy function that has a minimum at the desired closed curve, see Fig. \ref{fig:ORTrnc}(b). A similar idea, for potential energy shaping of {\em fully-actuated} mechanical systems, can be traced back to the work \cite{DUISTR}.

We endow the target dynamics with the port-Hamiltonian form
\begequ
\label{eq:target}
\dot{\xi} = F(\xi,x) \nabla V_d(\xi) =: \alpha(x,\xi)
\endequ
where $\xi \in \rea^2$ is the state of the target dynamics, $V_d:\rea^2 \to \rea$ plays the role of a desired potential energy, and $F:\rea^2 \times \rea^{2n} \to \rea^{2\times2}$ is a mapping to be selected satisfying $F(\xi,x) + F^\top(\xi,x) \leq 0$.

Notice that, as indicated above,  the target dynamics \eqref{eq:target} explicitly {\em depends on $x$}. As will be shown below, this is necessary  to deal with non-minimum phase systems and for performance enhancement in the PFP.

Similarly to the desired closed-loop dynamic of \cite{YIetal2} we propose
\begequ
\label{F}
F(\xi,x) =  \calj(\xi,x)- \calr(\xi),
\endequ
with
\begequ
\label{J}
\calj(\xi,x) := \begmat{ 0 & {{w(\xi,x)} \over\Phi(\xi)} \\ \\ -{{w(\xi,x)} \over\Phi(\xi)} & 0}
\endequ
where $w:\rea^2 \times \rea^{2n} \to \rea$ is a \emph{free} mapping and $\calr(\xi)>0$ and
\begequ
\label{eq:Vd}
V_d(\xi) = {1\over 2} |\Phi(\xi)|^2.
\endequ
\subsection{Stability properties of the target dynamics}
\label{subsec42}
%
Regarding the target dynamics oscillator \eqref{eq:target}-\eqref{J}, we have the following.

\begin{proposition}
\rm \label{prop:oscillator}
Consider the target dynamics \eqref{eq:target}-\eqref{J} verifying {Assumption \ref{ass:jordan_curve}}.\footnote{Assumption \ref{ass:jordan_curve} ensures that the set $\Omega$ is in the interior of $\calc_{\tt T}$.} Define the set
\begequ
\lab{ct}
\calc_{\tt T}:=\big\{\xi \in \rea^2 ~|~ \Phi(\xi)=0 \big\}.
\endequ
If the scalar function $w(\xi,x)$ is such that
\begin{equation}
\label{eq:cond_w}
w(\xi,x) \neq 0, \quad \forall \xi \in \mathcal{C}_{\tt T}, ~\forall x\in \rea^{2n},
\end{equation}
the set $\calc_{\tt T}$ is exponentially orbitally stable.

Furthermore, if the function $w(\cdot)$ depends only on $\xi$, namely, $w(\xi,x) := w_0(\xi)$, then the orbit $\calc_{\tt T}$ is \emph{almost globally} exponentially stable.
\end{proposition}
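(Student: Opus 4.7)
The strategy is to use $V_d$ itself as a Lyapunov function, exploiting the algebraic cancellation between the singularity of $\calj(\xi,x)$ at $\Phi=0$ and the factor $\Phi$ that appears in $\nabla V_d$. I would first rewrite the dynamics as
\[
\dot{\xi} \;=\; w(\xi,x)\, J\nabla\Phi(\xi) \,-\, \Phi(\xi)\, \calr(\xi)\nabla\Phi(\xi), \qquad J := \begmat{0 & 1 \\ -1 & 0},
\]
using $\nabla V_d(\xi) = \Phi(\xi)\nabla\Phi(\xi)$; this simultaneously shows that the vector field is smooth across $\calc_{\tt T}$ despite the singular form of $\calj$. A direct calculation then yields
\[
\dot V_d \;=\; -\Phi^2\, \nabla\Phi^\top \calr\, \nabla\Phi \;\le\; 0,
\]
with equality only on $\calc_{\tt T}\cup\Omega$.

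For the first claim, the footnote on Assumption~1 guarantees that $\Omega$ is separated from $\calc_{\tt T}$, so I can pick a tubular neighborhood $\calu$ of $\calc_{\tt T}$ on which $\nabla\Phi$ is bounded away from zero. Compactness of $\calc_{\tt T}$ then produces positive constants with $c_1\|\xi\|_{\calc_{\tt T}}^2 \le V_d(\xi) \le c_2\|\xi\|_{\calc_{\tt T}}^2$ and $\dot V_d \le -c_3 V_d$ throughout $\calu$, yielding exponential convergence of $\|\xi\|_{\calc_{\tt T}}$ to zero for every initial condition in $\calu$. The hypothesis $w(\xi,x)\neq 0$ on $\calc_{\tt T}$ ensures that on the invariant set $\calc_{\tt T}$ the dynamics reduce to $\dot\xi = w(\xi,x)\, J\nabla\Phi \neq 0$, so the attractor is a genuine orbit rather than a collection of equilibria.

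For the second claim the system becomes autonomous and LaSalle's invariance principle is available. After noting that sublevel sets of $V_d$ are forward invariant and (by the standing properness of the Jordan defining function $\Phi$) compact, every trajectory tends to the largest invariant set in $\{\dot V_d = 0\} = \calc_{\tt T}\cup\Omega$. Because $\Omega$ consists of finitely many isolated points at which $\nabla V_d = 0$, the $\omega$-limit set of each trajectory is either contained in $\calc_{\tt T}$ or equal to a single point of $\Omega$.

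The main obstacle is therefore to show that each equilibrium $\xi_0 \in \Omega$ is unstable. The natural tool is Chetaev's theorem applied to $W(\xi) := V_d(\xi_0) - V_d(\xi)$: since $\Phi(\xi_0)\neq 0$ and $\xi_0$ is an isolated zero of $\nabla\Phi$, the formula for $\dot V_d$ gives $\dot W > 0$ on a punctured neighborhood of $\xi_0$. To satisfy Chetaev's geometric hypothesis one has to exhibit a set on which $W>0$ with $\xi_0$ on its boundary, i.e.\ verify that $\xi_0$ is not a local minimum of $V_d$; this is forced by the sign consistency of Assumption~1, which gives $\Phi(\xi_0)\,\nabla^2\Phi(\xi_0) \preceq 0$ at interior critical points of $\Phi$ (strictly in the Morse case), so that $V_d$ has a local maximum or saddle at $\xi_0$. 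Instability of every $\xi_0 \in \Omega$ makes the corresponding stable manifolds measure-zero submanifolds, and almost every initial condition consequently converges to $\calc_{\tt T}$ at the exponential rate established in the first part.
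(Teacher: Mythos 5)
Your first part (the rewriting $\dot\xi = w\,J\nabla\Phi - \Phi\,\calr\nabla\Phi$, the computation $\dot V_d = -\Phi^2\nabla\Phi^\top\calr\nabla\Phi$, and the tubular-neighborhood comparison argument for exponential stability of $\calc_{\tt T}$) is sound and consistent with the paper, which simply defers the exponential qualifier to the reference on Mexican sombrero shaping and uses Barbalat instead of LaSalle because $w$ may depend on $x$. The genuine gap is in your instability step. The claim that Assumption~1 ``gives $\Phi(\xi_0)\,\nabla^2\Phi(\xi_0)\preceq 0$ at interior critical points'' does not follow: Assumption~1 constrains only the \emph{sign} of $\Phi$ on $\Omega$, not the Hessian. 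In the paper's own Cassini-oval example (Remark~R4) the origin is a saddle of $\Phi$ with $\Phi(0)<0$, so $\Phi(\xi_0)\nabla^2\Phi(\xi_0)$ is indefinite. At a saddle Chetaev still applies (since $V_d$ is not locally minimized there), but your argument breaks irreparably if some $\xi_0\in\Omega$ is a \emph{local maximum} of $\Phi$: then $V_d=\tfrac12\Phi^2$ has a strict local minimum at $\xi_0$, and since $\dot V_d\le 0$ the point is Lyapunov (indeed, by local LaSalle, asymptotically) stable, so no instability proof can exist. Assumption~1 does not exclude this situation: a ``crater-shaped'' defining function, e.g.\ a small generic perturbation of $\Phi(\xi)=(|\xi|^2-1)(|\xi|^2+c)$ with $0<c<1$, has finitely many isolated critical points, all with negative critical values (so Assumption~1 holds), yet the origin is a strict local maximum of $\Phi$. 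Thus the sentence ``verify that $\xi_0$ is not a local minimum of $V_d$; this is forced by the sign consistency of Assumption~1'' is precisely the missing ingredient, and it cannot be repaired by the stated Hessian inequality.

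For comparison, the paper does not attempt your uniform Chetaev argument: it proves instability only for an isolated \emph{minimum} of $\Phi$ in the interior (which is a local maximum of $V_d$), and then rules out asymptotically stable equilibria at the remaining points of $\Omega$ by a topological contradiction argument based on the openness and invariant boundary of the domain of attraction, combined with the already-established convergence to $\calc_{\tt T}\cup\Omega$. So the burden you try to discharge with the Hessian claim is handled by a different device in the paper, and your proposal is not merely the paper's proof in other words. Two secondary points: the paper has no standing properness assumption on $\Phi$ (boundedness of $\xi$ is argued from ``regularity'' of $\Phi$), so you should state properness explicitly if you need compact sublevel sets for LaSalle; and your closing inference ``instability of every $\xi_0\in\Omega$ makes the corresponding stable manifolds measure-zero submanifolds'' requires hyperbolicity (or a Morse-type nondegeneracy assumption) — mere instability of a nonhyperbolic equilibrium does not prevent it from attracting a set of positive measure. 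If you wish to keep your route, add the hypotheses that no point of $\Omega$ is a local maximum of $\Phi$ (equivalently, $V_d$ has no local minima on $\Omega$) and that the critical points are nondegenerate; with those, your Chetaev plus stable-manifold argument closes.
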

\begin{proof}\rm
From
$$
\dot{V}_d = - \|\nabla V_d\|^2_{\calr(\xi)} \le 0,
$$
and the fact that $\calr(\xi)$ is positive definite, we conclude that  $V_d \in \call_\infty$ and $\nabla V_d \in \call_2$. Now, using the fact that
\begequ
\lab{nabvd}
\nabla V_d(\xi)= \Phi(\xi) \nabla \Phi(\xi)
\endequ
and doing some simple calculation we get
$$
{d\over dt}(\nabla V_d) = - \Phi(\xi) (\nabla\Phi(\xi))^\top \calr(\xi) \nabla\Phi(\xi).
$$
Now
$$
V_d  \in \call_\infty
~\Leftrightarrow~
\Phi \in \call_\infty
~\Rightarrow ~
\xi \in \call_\infty,
$$
where the last implication follows from regularity of $\Phi$. Hence, we conclude that ${d\over dt}(\nabla V_d) \in \call_\infty$. Invoking Barbalat's Lemma we conclude that $\nabla V_d$ converges to zero, that is
$$
\lim_{t\to\infty} \Phi(\xi(t)) \nabla \Phi(\xi(t)) = 0.
$$
Thus, $\xi$ ultimately converges either to the curve $\calc_{\tt T}$ or to the {zero-Lebesgue measure set $\Omega$}.

The proof of the property with the qualifier ``exponentially" can be established following \cite{YIetal2}.

For the case $w(\xi,x)= w_0(\xi)$, the dynamics \eqref{eq:target} becomes an {\em autonomous} system. The claim regarding the almost global stability of $\calc_{\tt T}$ is established proving that there are no asymptotically stable equilibria in $\Omega$, and is presented below.

First, we show that all the points of $\Omega$ are in the interior of $\calc_{\tt T}$. Invoking the Jordan curve theorem \cite[Theorem 5.20, pp. 112]{ARM}, the complement of the desired path, $\rea^2/\calc_{\tt T}$, consists of exactly two connected components, one of which is bounded (the interior $\Omega_i$), and the other one is unbounded (the exterior $\Omega_e$). Besides, the curve $\calc_{\tt T}$ is the boundary of each component. Thus, the set $\Omega_i \cup \calc_{\tt T}$ is a compact set, and thus contains a least one extremum point, where $\nabla\Phi(\xi) =0$. Furthermore, this extremum point should be in $\Omega_i$ rather than in $\calc_{\tt T}$, since $\nabla \Phi(\xi)\big|_{\calc_{\tt T}} \neq 0$ due to the definition of Jordan curves. Assumption \ref{ass:jordan_curve} guarantees all the points in $\Omega$ are inside $\Omega_i$.

{Second, we prove that at least one point in $\Omega$ is unstable. Without loss of generality, we assume\footnote{Otherwise, we can select $-\Phi$ to carry out the analysis.} $\Phi(\xi)\big|_{\Omega_i} <0$ and $\Phi(\xi)\big|_{\Omega_e}>0$. Thus, there exists an isolated minimum $\xi_\star$ in $\Omega_i$, satisfying
$$
\nabla \Phi(\xi_\star) =0, \quad \nabla^2 \Phi(\xi_\star)>0.
$$
For a sufficiently small $\varepsilon>0$, we construct a Lyapunov-like function in the small neighborhood $B_\varepsilon(\xi_\star) \subset \Omega_i$ as
$$
W(\xi) = - \Phi(\xi) + \Phi(\xi_\star).
$$
 It is clear that
\begalis{
\dot{W}(\xi) & = - (\nabla \Phi^\top \calr(\xi) \nabla \Phi) W =: - R_0(\xi) W,
}
and
$$
\dot{W}(\xi) > 0, \quad \forall \xi \in B_\varepsilon(\xi_\star) / \xi_\star,
$$
where $R_0(\xi)$ is positive definite in $\rea^2 /\Omega$. According to Lyapunov's instability theorem \cite[Theorem 4.3]{KHA}, as well as the property of the function $W(\xi)$, we conclude that $\xi_\star$ is an unstable equilibrium.}

Finally, we will show that the other equilibrium points in $\Omega$ are unstable by contradiction. We assume that there exits another point $\xi_0 \in \Omega_i/\{\xi_\star\}$, which is asymptotically stable. According to the topological properties of the domain of attraction \cite[Proposition 5.44, pp. 217]{SAS}, the domain of attraction $\cale_0$ of $\xi_0$ is an \emph{open}, invariant set, the boundary of which, denoted as $\partial \cale_0$, is {\em invariant} as well. Noticing that $\xi_0$ is not connected to the Jordan curve $\calc_{\tt T}$ and that there is an unstable equilibrium in the interior $\Omega_i$, we conclude that the boundary $\partial\cale_0$ is in the set $\Omega_i$. We have already shown that all the trajectories will converge to $\calc_{\tt T} \cup \Omega$, that contradicts the invariance of $\partial \cale_0$. Therefore, we conclude that there is no asymptotically stable equilibrium in $\Omega_i$, and the $\omega$-limit set of the dynamics \eqref{eq:target} with $w(\xi,x) = w_0(\xi)$ contains $\calc_{\tt T}$ only.
\end{proof}
\subsection{Remarks}
\label{subsec43}
%

\noindent {\bf R1}
Even though $\Phi(\xi)$ appears in the denominator of the interconnection matrix $\calj(\xi,x)$ in \eqref{J}, the closed-loop dynamics \eqref{eq:target} is well-posed---see \cite{YIetal,YIetal2} for additional details.

\vspace{0.3cm}

\noindent {\bf R2}
The damping matrix $\calr(\xi)$ can be selected freely to assign the convergence speed to the orbit. The function $w(x,\xi)$ determines the behaviour on the orbit---equivalently the path following speed.\footnote{We will show in the proof of Proposition \ref{prop:main}, that the mapping $w(x,\xi)$ appears in the ``zero dynamics", thus with a suitable selection of this mapping we may have the possibility to treat non-minimum phase systems.}

\vspace{0.3cm}

\noindent {\bf R3}
If $\Phi(\xi)$ has an unique extremum globally, we can complete the proof, trivially, with the Lyapunov instability theorem, which is the case studied in \cite{YIetal2}.

\vspace{0.3cm}

\noindent {\bf R4}
A Jordan curve can be represented with an \emph{infinite} selection of functions. For instance, a well-known function for Cassini oval is
$$
\Phi(\xi) = \xi_1^4 + \xi_2^4 - 2a_0^2(\xi_1^2 - \xi_2^2) + a_0^4 - b_0^4,
$$
which is shown in Fig. \ref{fig:8} with $a_0=1,~b_0=1.2$. Notice that are three points satisfying $\nabla \Phi(\xi) = 0$. We also show in the figure the phase portrait of the associated target dynamics with $w(\xi,x) = 1$ and $\calr(\xi) = I_2$, where it is shown that the three equilibria are unstable.

\begin{figure}[h]
  \centering
  \includegraphics[width=0.8\textwidth]{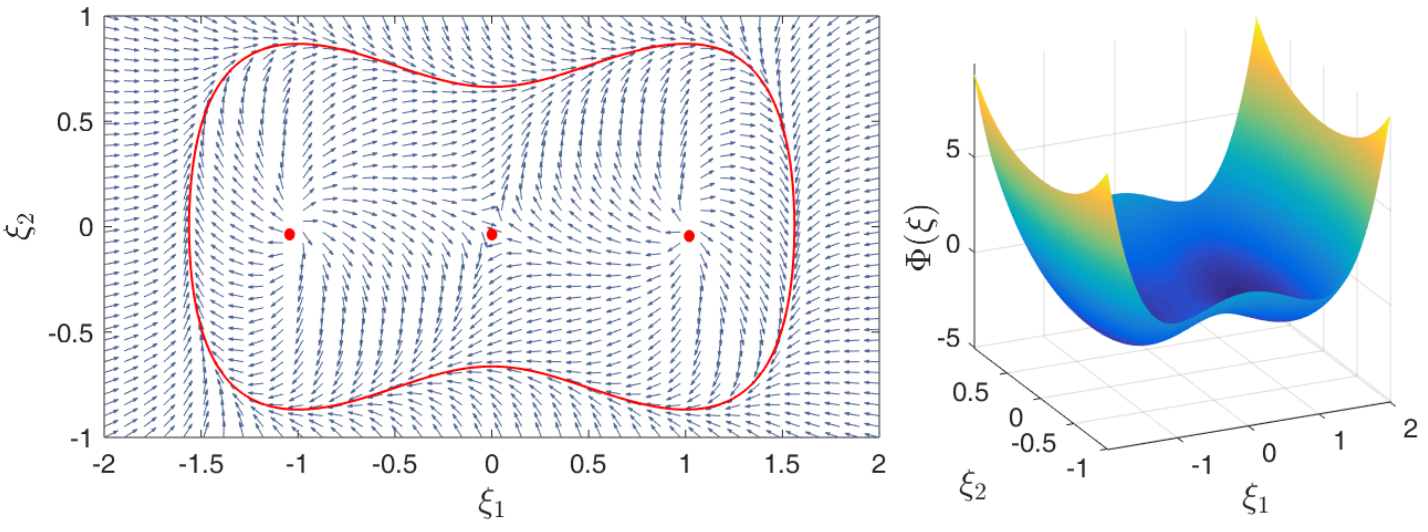}
  \caption{Illustration of the path of the Cassini oval and associated phase plane}\label{fig:8}
\end{figure}

\vspace{0.3cm}

\noindent {\bf R5}
The guiding vector-field for path following with kinematic models studied in the literature, {\em e.g.}, \cite{KAPetal}, shares some similarities with the proposed target dynamics. The authors also studied the possible existence of asymptotically stable equilibria in $\Omega$. However, this is only conjectured in  \cite[Hypothesis 1, pp. 1379]{KAPetal} and it is studied via the inspection of the phase portraits of some examples. Interestingly, following the proof of Proposition \ref{prop:oscillator}, we give an affirmative answer to this {conjecture}.  Indeed, we have proven that there is no asymptotically stable equilibrium point in $\Omega$. In \cite{WIL}, it has been proved that the domain of attraction of a limit cycle is also an \emph{open} set, a fact providing an alternative way to prove Proposition \ref{prop:oscillator}.

%
\section{Main Result}
\label{sec5}
%
Before presenting the main result let us take a brief respite and recapitulate our proposal of an I\&I formulation of the PFP for the plant dynamics \eqref{pH}. First, the control objective is to render the set  \eqref{cx}, that is,
$$
\{x\in\rea^{2n} ~ | ~ \Phi(h(q))=0\}
$$
attractive and invariant. Second, we defined a target dynamics \eqref{eq:target}-\eqref{J}
$$
\dot{\xi} = F(\xi,x) \nabla V_d(\xi)
$$
such that the set \eqref{ct}
$$
\big\{\xi \in \rea^2 ~|~ \Phi(\xi)=0 \big\}
$$
is orbitally exponentially stable. We proceed then to design now an I\&I control law that immerses the target oscillator \eqref{eq:target} into the plant dynamics \eqref{pH}, via the attractive and invariant manifold
\begin{equation}
\label{eq:M}
\calm := \{(q,p,\xi)\in\rea^{n}\times \rea^n \times \rea^2~ |~ h(q)= \xi \}.
\end{equation}
Following the I\&I procedure articulated in Proposition \ref{prop:I&I}, we will achieve this objective by ensuring the existence of a control $u=\beta(\xi,x)$  and a mapping $\pi(\xi,x)$ such that, on one hand, the manifold
$$
\{(x,\;\xi) \in \rea^{2n} \times \rea^2\; | \;x = \pi(\xi,x)\},
$$
is attractive and invariant. On the other hand, that this implies that $\calm$ in \eqref{eq:M} is also attractive and invariant.
\subsection{An additional assumption and a suitable change of coordinates}
\label{subsec51}
%
To carry out the controller design we need an additional assumption that is motivated by the following considerations. First, since we are interested in regulating the mapping $h(q)$, it is reasonable to view it as an ``output signal" and impose some conditions on its selection that will simplify this task. In particular, we would like to ensure that it has a well defined vector relative degree $(2,2)$---hence easily stabilizable via partial feedback linearization. Second, to carry out the calculations we will find convenience to introduce a partial change of coordinates for the configuration state $q$, hence it is necessary to ensure that it is a diffeomorphism. These conditions, are guaranteed imposing the following technical, but mild, assumption on the mapping $h(q)$.

\begin{assumption}
\label{ass:h}\rm
The output mapping $h(q)$ is such that the following is true.

\noindent {\bf (i)} The matrix\footnote{This is the input matrix after calculating ${d^2 \over dt^2} h(q)$.}
$$
\nabla h^\top(q) A(q) M^{-1}(q) G(q) \in \rea^{2\times {m}}
$$
is full rank.

\noindent{\bf (ii)} There exists a mapping $N:\rea^n \to \rea^{n-2}$ such that the following matrix\footnote{For future reference, we defined the vector $\bq$ and its corresponding partition, with ${\bf q}_y$  the output to be regulated with the remaining coordinates $(\bq_N,\dot \bq_N)$ bounded.}
\begequ
\lab{defbfq}
T(q)=\begmat{h(q) \\ N(q)}=:\bq=\begmat{\bfqy \\ \bfqn},
\endequ
is injective.
\qed
\end{assumption}

{The construction of a mapping $N(q)$ satisfying condition {\bf(ii)} is related to the so-called ``$\tilde{P}(m,n)$ Problem'' that has been studied in the context of state observer design in \cite[Section 9.2]{BER} and \cite{BERetal}. Indeed, the existence of $N(q)$ can be guaranteed in convex sets invoking Wazeski's theorem, see \cite[Theorem 9.4]{BER}.}

We write now the system \eqref{pH} in the coordinates $\bfx:=\col(\bq,p)$ yielding the port-Hamiltonian system
\begequ
\label{pH_trans}
\dot \bfx
=
\begmat{ 0 & \cala(\bq) \\ - \cala^\top(\bq) & - {\bf R}(\bfx)} \nabla \calh(\bfx)
+
\begmat{0 \\ {\bf G}(\bq)} u
\endequ
with the definitions
\begalis{
\calh(\bq,p)& := \hal p^\top M^{-1}( T^{\tt I}(\bq)) p + U(T^{\tt I}(\bq))\\
\cala(\bq)& := \nabla T^\top(T^{\tt I}(\bq))A(T^{\tt I}(\bq))\\
{\bf R}(\bq,p)& := R(T^{\tt I}(\bq),p)\\
{\bf G}(\bq))&:= G(T^{\tt I}(\bq)),
}
and $T^{\tt I}$ denoting the inverse mapping of $T$. Our motivation to introduce the new representation of the system is to simplify the computations leading to the verification of the conditions of Proposition \ref{prop:I&I}
of the I\&I procedure.
\subsection{Construction of the I\&I controller}
\label{subsec52}
%
In this section we elaborate on three essential differences of the design of the I\&I controller for PFP with respect to the design for classical regulation \cite{ASTORT} or orbital stabilization \cite{ORTrnc} tasks.

\vspace{0.3cm}

\noindent {\bf D1} Due to the dependence of the target dynamics on the systems state, some modifications to the conditions of Proposition \ref{prop:I&I} are needed. A first condition is that, on the manifold $\calm$, the state $\xi$ can be expressed as an explicit function of the systems state---this invertibility property will be guaranteed verifying the classical Jacobian rank condition of the Implicit Function Theorem.  A second modification is the definition of new immersion and implicit manifold  assumptions, {\bf A2} and {\bf A3}, respectively. Finally, it is necessary to redefine the attractivity and boundedness condition  {\bf A4} for our particular problem.

\vspace{0.3cm}

\noindent {\bf D2}  Recalling that our final objective is to immerse the target oscillator \eqref{eq:target} into the plant dynamics \eqref{pH_trans}, via the attractive and invariant manifold $\calm$, given in \eqref{eq:M}, it is necessary to ensure that this is \emph{implied} by the attractivity and invariance of
\begequ
\lab{me}
\calm_e:=\{(\bfx,\;\xi) \in \rea^{2n} \times \rea^2\; | \;\bfx = \pi(\xi,\bfx)\}.
\endequ

\vspace{0.3cm}

\noindent {\bf D3}  In the I\&I controller for regulation of {Proposition \ref{prop:I&I}} it is, first, assumed the existence of the mappings \eqref{iimap} such that the conditions {\bf A1-A4} hold, and then a state-feedback control law that ensures the regulation objective is constructed. In PFP we give an explicit expression for the mappings $c$, $\pi$ and $\phi$ that ensure the (new versions of) immersion {\bf A2} and implicit manifold {\bf A3} conditions. Then, we assume the existence of the mappings $\beta$ and $w$---the first one being the effective state feedback to be applied and the latter defining the interconnection target dynamics via \eqref{J}---such that the manifold is attractive and the off-the-manifold coordinate and the closed-loop system trajectories are bounded, {\em i.e.}, verifying condition {\bf A4}, completing in this way the design.

\subsection{Construction of the I\&I controller}
In this subsection we finalize the design of the I\&I controller that solves the PFP. In view of the explanation given in {\bf D3} above, the main result, naturally, split in two parts, summarized in  {Propositions \ref{pro3}} and {\ref{prop:main}} below.

\begin{proposition}
\label{pro3}\rm
Consider the dynamics \eqref{pH_trans} and the target system \eqref{eq:target} verifying Assumptions \ref{ass:jordan_curve}-\ref{ass:h}. There exist mappings
$$
\pi: \rea^{2} \times \rea^{2n} \to \rea^{2n}, ~
\phi:\rea^{2n} \to \rea^{n}, ~
c :\rea^{2} \times \rea^{2n} \to \rea^{2n},
$$
with $\pi$ verifying
$$
\rank\{\nabla_\xi \pi (\xi,\bfx)\}=2,
$$
such that the system \eqref{pH_trans} in closed-loop with $u= c(\xi,\bfx)$ satisfies the following conditions.

\noindent {\bf A2'} (New immersion condition) The manifold $\calm_e$ \eqref{me} is invariant, equivalently,
\begin{equation}
\label{dot_bfx}
\dot{\bfx}|_{\bfx = \pi(\xi,\bfx)}  = {d \over dt}[\pi(\xi,\bfx) ].
\end{equation}
\noindent {\bf A3'} (New implicit manifold condition) The following equivalence holds\footnote{The implicit function is written with the argument of the original coordinate $x$, rather than $\bfx$, for convenience to obtain the feedback law.}
\begequ
\label{phi_x}
\phi(x) =0 \quad \Leftrightarrow \quad
\bfx = \pi(\bfx,\xi).
\endequ
Furthermore, the controller $u=c(\xi,\bfx)$ guarantees the manifold $\calm$ defined in \eqref{eq:M} is invariant.
\qed
\end{proposition}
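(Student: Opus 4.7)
The plan is to construct the three maps explicitly in the transformed coordinates $\bfx=\col(\bq,p)=\col(\bfqy,\bfqn,p)$ provided by Assumption \ref{ass:h}, exploiting the block structure of the port-Hamiltonian dynamics \eqref{pH_trans}. I would take the immersion map in the natural form
\[
\pi(\xi,\bfx):=\col\bigl(\xi,\;\bfqn,\;\pi_p(\xi,\bfx)\bigr),
\]
so that the two output coordinates are forced to coincide with the target state, and $\pi_p:\rea^2\times\rea^{2n}\to\rea^n$ is a momentum assignment still to be determined. The rank condition is then immediate, because $\nabla_\xi\pi=\col(I_2,\,0,\,\nabla_\xi\pi_p)$ has a leading $I_2$ block and thus full column rank 2.

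The new invariance identity \eqref{dot_bfx} decomposes block-wise. Writing $\cala_y(\bq)$ for the top $2\times n$ block of $\cala(\bq)$, the output block, combined with $\dot\bfqy=\cala_y(\bq)M^{-1}(\bq)p$ and $\dot\xi=\alpha(\xi,\bfx)$, forces the algebraic constraint
\[
\cala_y(\bq)\,M^{-1}(\bq)\,\pi_p(\xi,\bfx)=\alpha(\xi,\bfx),
\]
which is solvable for $\pi_p$ by Assumption \ref{ass:h}(i), with an $(n-2)$-dimensional kernel providing the residual freedom. The $\bfqn$-block is trivially satisfied since $\pi$ is the identity on $\bfqn$. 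The $p$-block, after substituting the momentum dynamics of \eqref{pH_trans} and expanding $d\pi_p/dt$ along trajectories, reduces to a linear equation in $u$ of the form ${\bf G}(\bq)\,u=\Psi(\xi,\bfx)$, from which I would set
\[
c(\xi,\bfx):={\bf G}^\dagger(\bq)\,\Psi(\xi,\bfx),
\]
subject to the compatibility condition ${\bf G}^\perp(\bq)\,\Psi(\xi,\bfx)=0$, which is the $\bfx$-dependent analogue of the immersion identity \eqref{fbi} of Proposition \ref{prop:I&I}.

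For the implicit manifold map, note that $\bfx=\pi(\xi,\bfx)$ forces $\xi=\bfqy$, so eliminating $\xi$ from the momentum component yields the natural choice
\[
\phi(x):=p-\pi_p(h(q),\bfx),
\]
a map into $\rea^n$. Then $\phi(x)=0$ holds precisely when $p$ takes its on-manifold value, and choosing $\xi=\bfqy=h(q)$ exhibits the target state for which $\bfx=\pi(\xi,\bfx)$, establishing \eqref{phi_x}. Invariance of $\calm$ follows at once, because any point satisfying $\bfx=\pi(\xi,\bfx)$ in particular satisfies $h(q)=\xi$, so $\calm_e\subset\calm$, and the controller $u=c$ renders $\calm_e$ invariant by construction. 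The hard part will be the compatibility condition ${\bf G}^\perp\Psi=0$: because the system is underactuated ($m<n$), the $n-m$ directions annihilated by ${\bf G}^\perp$ cannot be driven directly, so the $(n-2)$-dimensional freedom in $\pi_p$ must be used to align $\Psi$ with the range of ${\bf G}$. This is where the full-rank hypotheses of Assumption \ref{ass:h} are essential, and it is also the point where the extra freedom $w(\xi,x)$ in the target dynamics---highlighted in Remark~\textbf{R2}---can be exploited.
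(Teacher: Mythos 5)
Your setup coincides with the paper's ($\pi=\col(\xi,\bfqn,\pi_p)$, rank condition from the leading $I_2$ block, $\phi$ measuring the deviation of $p$ from its on-manifold value), but the proof has a genuine gap exactly where you flag ``the hard part'': the proposition asserts the \emph{existence} of $c(\xi,\bfx)$, and your argument reduces that existence to the compatibility condition ${\bf G}^\perp(\bq)\Psi(\xi,\bfx)=0$, which you do not resolve. For a generic $\pi_p$ satisfying only the two output-block equations this condition will simply fail, since the $n$-dimensional momentum-matching equation cannot be met by an $m$-dimensional input; saying that the $(n-2)$-dimensional freedom in $\pi_p$ ``must be used'' is the statement of the problem, not its solution. (A minor additional slip: solvability of $\cala_y M^{-1}\pi_p=\alpha$ follows from invertibility of $\cala$ and $M$, not from Assumption~\ref{ass:h}(i), which concerns the input matrix and is needed later.)

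The paper closes precisely this gap by a specific choice of the residual freedom: the last $n-2$ components of the velocity assigned by $\pi_p$ are taken to be the system's \emph{own} $\dotbfqn$ (a function of $\bfx$), i.e.
\begin{equation*}
\pi(\xi,\bfx)=\col\Big(\xi,\ \bfqn,\ M\cala^{-1}\begmat{F(\xi,\bfx)\nabla V_d(\xi)\\ \dotbfqn}\Big),
\end{equation*}
so the manifold constrains only the output velocity $\dotbfqy$ and leaves the unactuated coordinates free. Premultiplying the momentum-matching equation by $\cala M^{-1}$ and splitting with $\mathbb{I}=[I_2~~0_{2\times(n-2)}]$ and $\mathbb{I}^\perp$, the $\mathbb{I}^\perp$-rows collapse to the tautology ${d\over dt}(\dotbfqn)=\ddot{\bq}_N$ (both sides are computed along the same closed-loop dynamics), and only two scalar equations remain, which determine $c(\xi,\bfx)$ through the pseudoinverse of the full-rank matrix $\nabla h^\top A M^{-1}G$ guaranteed by Assumption~\ref{ass:h}(i); no annihilator condition on ${\bf G}$ ever needs to hold, and the burden of underactuation is deferred to condition {\bf T2} of Proposition~\ref{prop:main}. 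With this $\pi_p$ fixed, the implicit-manifold map also becomes concrete (effectively the two-dimensional $\phi(x)=\nabla h^\top AM^{-1}p-F(h(q),x)\nabla V_d(h(q))$), whereas in your version $\phi=p-\pi_p$ inherits the unresolved ambiguity of $\pi_p$. Without this (or an equivalent) choice, your proposal does not yet prove the proposition.
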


The proof is given in Appendix \ref{sec:app_proof_lemma}.

To complete the I\&I design it remains to verify condition {\bf A4}. Towards this end, we split the tasks of ensuring convergence of the off-the-manifold coordinate $z=\phi(x)$ and establishing boundedness of the coordinates $(\yn, \dotbfqn)$. The first task is solved selecting the mapping $\beta(x,z)$ {\em as a function of} a free mapping $w(\xi,x)$. Then, the latter is selected to ensure the second boundedness condition. Both steps are summarized in the proposition below.

\begin{proposition}
\label{prop:main}\rm
Consider the dynamics \eqref{pH} verifying {Assumptions \ref{ass:jordan_curve}-\ref{ass:h}} and the following
\begin{itemize}
  \item[\bf T1] The origin of the dynamics
  \begequ
  \label{dyn_z}
  \begin{aligned}
  \dot{z} & = \nabla \phi^\top
  \left\{
  \begmat{0_{n \times n} & A \\ -A^\top & - R}
  \nabla H
  +
  \begmat{0 \\ G} \beta(x,z)
  \right\}
  \end{aligned}
  \endequ
  is asymptotically stable for all mappings $w(\xi,x)$, {with $z=\phi(x)$ defined in \eqref{app:phi_x}}.
  \item[\bf T2] The mapping $w(\xi,x)$ guarantees the boundedness of the coordinates $(\yn, \dotbfqn)$.
\end{itemize}
Under these conditions, the feedback law $u= \beta(x,\phi(x))$ ensures the properties \textbf{P1}-\textbf{P3} of the problem formulation, providing a solution of the PFP.
\end{proposition}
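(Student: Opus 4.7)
The plan is to assemble the three properties \textbf{P1}--\textbf{P3} by layering Proposition \ref{pro3} (invariance), hypothesis \textbf{T1} (attractivity of $z=0$), Proposition \ref{prop:oscillator} (convergence to $\calc_{\tt T}$), and hypothesis \textbf{T2} (boundedness of the complementary coordinates). The first observation is that the compatibility constraint $\beta(\pi(\xi),0)=c(\pi(\xi))$ makes the feedback $u=\beta(x,\phi(x))$ coincide with the immersion controller $u=c(\xi,\bfx)$ on $\calm_e$. Hence whenever $z:=\phi(x)=0$ the dynamics are exactly those analyzed in Proposition \ref{pro3}, so both $\calm_e$ and the manifold $\calm$ of \eqref{eq:M} are rendered invariant, which serves as the structural backbone of the argument.

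For \textbf{P2} I would proceed as follows. By \textbf{T1}, $z(t)\to 0$, so trajectories approach $\calm_e$; on $\calm_e$ one has $h(q)=\xi$ and the $\xi$-component satisfies the target oscillator \eqref{eq:target}, whose state converges to $\calc_{\tt T}$ by Proposition \ref{prop:oscillator}. Combining the two gives $\Phi(h(q))\to 0$, that is $\|x\|_{\calc_x}\to 0$, and invariance of $\calc_x$ is inherited from invariance of $\calm$ together with the fact that $\nabla V_d\equiv 0$ on $\calc_{\tt T}$. Boundedness \textbf{P1} is then handled coordinate by coordinate in the partition \eqref{defbfq}: $(\bfqn,\dotbfqn)$ is bounded by \textbf{T2}; $\xi$ is bounded because Proposition \ref{prop:oscillator} yields $V_d\in\callinf$ and $\Phi$ is regular; and the relation $\bfx\to\pi(\xi,\bfx)$ implied by $z\to 0$, together with the fact that the $\bfqy$-block of $\pi$ equals $\xi$, yields boundedness of $(\bfqy,\dotbfqy)$. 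Injectivity of $T(q)$ from Assumption \ref{ass:h}(ii) then returns boundedness of $q$, and the momentum $p$ follows from the kinematic relation linking $\dot q$ and $p$.

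For \textbf{P3} I would evaluate the target vector field on $\calc_{\tt T}$. Pick $x\in\calc_x$; invariance of $\calm$ gives $\xi=h(q)\in\calc_{\tt T}$, so $\Phi(\xi)=0$ and \eqref{nabvd} forces the damping term $-\calr(\xi)\Phi(\xi)\nabla\Phi(\xi)$ to vanish. The skew part of \eqref{F}, read off from \eqref{J}, simplifies on $\calc_{\tt T}$ to $w(\xi,x)\begmat{(\nabla\Phi(\xi))_2 \\ -(\nabla\Phi(\xi))_1}$, which is nonzero by \eqref{eq:cond_w} together with $\nabla\Phi\neq 0$ on a regular Jordan curve. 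Hence $\dot\xi\neq 0$, and since $\bfqy=\xi$ on $\calm$ we conclude $\dotbfqy\neq 0$, which forces $\dot q\neq 0$ and therefore $\dot x\neq 0$, as required.

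The hard part, I expect, will be justifying the cascade structure implicit in the three steps above. The $z$-subsystem \eqref{dyn_z} is coupled to $(\xi,\bfqn,\dotbfqn)$, and \textbf{T1} postulates asymptotic stability of $z=0$ uniformly in $w$, while \textbf{T2} is only invoked \emph{afterwards} to choose $w$ so that the complementary coordinates remain bounded. A rigorous argument must ensure that the convergence estimate for $z$ does not presuppose boundedness of $(\xi,\bfqn,\dotbfqn)$, so that no finite escape can occur during the transient and the two hypotheses can be chained without circularity. This decoupling step, rather than any individual invariance or convergence argument, is the technical heart of the proof, and it is where one would either appeal to an ISS/contraction-based sharpening of \textbf{T1}---as hinted at in the remark after Proposition \ref{prop:I&I}---or impose additional structure on $\beta$ to obtain the needed uniform estimate.
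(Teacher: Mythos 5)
Your skeleton (invariance from Proposition \ref{pro3}, $z\to 0$ from \textbf{T1}, \textbf{T2} for $(\bfqn,\dotbfqn)$, and the one-line check of \textbf{P3} via \eqref{eq:cond_w}) matches the paper, but the step you explicitly defer as ``the technical heart'' is precisely what the paper's proof consists of, and your proposed remedies point in the wrong direction. The reduction-principle reasoning you use for \textbf{P1}/\textbf{P2} --- ``trajectories approach $\calm_e$, on $\calm_e$ the $\xi$-component obeys the target oscillator, hence Proposition \ref{prop:oscillator} gives convergence and boundedness'' --- is not valid as stated, because in closed loop there is no separate $\xi$-state: $\bfqy$ obeys the \emph{perturbed} dynamics $\dotbfqy = F(\bfqy,x)\nabla V_d(\bfqy)+z$, and Proposition \ref{prop:oscillator} only covers $z\equiv 0$. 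The paper closes this gap without strengthening \textbf{T1} to an ISS property and without imposing extra structure on $\beta$: it first checks that $(\bfqy,\dotbfqy)\mapsto(\bfqy,z)$ is a legitimate change of coordinates (the Jacobian of $\mathcal{T}_x$ in \eqref{z} is full rank for \emph{any} choice of $w$), and then shows that $V_d$ itself serves as an ISS-like Lyapunov function for the $\bfqy$-subsystem driven by $z$: using \eqref{nabvd} and Assumption \ref{ass:jordan_curve} (finitely many isolated points in $\Omega$) one gets $\dot V_d \le -\ell_\varepsilon V_d + {1\over 4r}|z|^2$ outside arbitrarily small neighborhoods of $\Omega$, which yields boundedness of $\bfqy$ and hence \textbf{P1}. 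For \textbf{P2} the paper does not pass through $\calm_e$ at all; it writes the transverse coordinate directly, $\dot\Phi = -\|\nabla\Phi\|^2_{\calr(\bfqy)}\Phi + \nabla\Phi^\top z$, an exponentially stable scalar time-varying system forced by the decaying $z$, giving $\Phi(h(q(t)))\to 0$. So the ``decoupling'' comes from the structure of the Mexican-sombrero target dynamics (the $1/\Phi$ factor in $\calj$ cancels against $\nabla V_d = \Phi\nabla\Phi$, leaving only the damping term in the $\Phi$-equation), not from any sharpened hypothesis on the controller.

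Two smaller points. Your justification of invariance of $\calc_x$ via ``$\nabla V_d\equiv 0$ on $\calc_{\tt T}$'' is misleading: $\nabla V_d = 0$ on the curve does \emph{not} make the vector field vanish there (that would contradict \textbf{P3}); the correct mechanism is that $\nabla\Phi^\top F\nabla V_d = -\|\nabla\Phi\|^2_{\calr}\Phi$ vanishes at $\Phi=0$, i.e.\ the skew part drops out of the $\Phi$-equation, which you in fact compute correctly in your \textbf{P3} paragraph. Finally, your worry about circularity between \textbf{T1} and \textbf{T2} is resolved in the paper simply by taking \textbf{T1} at face value (asymptotic stability of the $z$-origin for all $w$), feeding the resulting decaying $z$ into the $V_d$ estimate for $\bfqy$, and only then invoking \textbf{T2} for $(\bfqn,\dotbfqn)$; no contraction or ISS sharpening of \textbf{T1} is used.
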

\begin{proof}\rm
We first verify property \textbf{P1} of boundedness of $x=\col(q,p)$. In view of the fact that $\nabla T$ is full rank the change of coordinates
$$
x ~ \mapsto ~ (\bfqy,\bfqn,\dotbfqy,\dotbfqn)
$$
is a diffeomorphism, thus the boundedness of $x$ is equivalent to that of $(\bfqy,\bfqn,\dotbfqy,\dotbfqn)$. Due to the assumption {\bf T2}, we only need to prove the boundedness of $(\bfqy,\dotbfqy)$, which will be done introducing yet another change of coordinate, namely
$$
(\bfqy,\dotbfqy) \mapsto (\bfqy,z).
$$
First, notice that the mapping $z=\phi(x)$ defined in \eqref{app:phi_x} can be written as
\begequ
\lab{z}
z = \dotbfqy - F(\bfqy,x)\nabla V_d(\bfqy)=:\mathcal{T}_x(\bfqy,\dotbfqy).
\endequ
Computing the Jacobian of this parameterized mapping yields
\begalis{
\nabla \mathcal{T}_x & = I - {\nabla_{\dotbfqy} [F(\bfqy,x)\nabla V_d(\bfqy)]}\\
&= I - {1\over \Phi(\bfqy)}\nabla V_d^\top \nabla_{\dotbfqy}w(\xi,x) \begmat{0 & 1 \\ -1 & 0}\\
& = \begmat{1 &- \nabla \Phi^\top \nabla_{\dotbfqy}w(\xi,x)  \\ \\ \nabla \Phi^\top \nabla_{\dotbfqy}w(\xi,x) & 1}
}
that, regardless of the selection of the mapping $w(\xi,x)$, is full rank.

Now, the assumption {\bf T1} ensures the convergence---and, consequently, the boundedness---of $z$, we thus only need to prove the boundedness of the state $\bfqy$, the dynamics of which is given in \eqref{z}
$$
\dotbfqy = F(\bfqy,x) \nabla V_d(\bfqy) + z,
$$
where $z$ is, in view of  assumption {\bf T1}, \emph{asymptotically} decaying to zero. Let us now compute
$$
\begin{aligned}
\dot{V}_d & = - \nabla V_d^\top \calr(\bfqy) \nabla V_d + \nabla V_d^\top z \\
& \le  - 2\lambda_{\min}\big\{ \calr(\bfqy)\big\} |\nabla \Phi|^2 V_d  + |z||\Phi \nabla \Phi |
\\
& \le - 2\Big(\lambda_{\min}\big\{ \calr (\bfqy)\big\} + r\Big) |\nabla \Phi|^2 V_d + {1\over 4r}|z|^2
\end{aligned}
$$
for some $ r \in (0, \lambda_{\min}\big\{ \calr (\bfqy)\big\})$.

From Assumption \ref{ass:jordan_curve}, we have $\nabla \Phi(\bfqy) =0$ only in the set $\Omega$ with finite isolated points. Thus, for any $\varepsilon >0$ there always exists $\ell_\varepsilon > 0$ such that
$$
|\nabla \Phi(\bfqy)|^2 >  {\ell_\varepsilon \over 2\lambda_{\min}\big\{\calr(\bfqy)\big\}},
$$
except some small neighborhoods of $\Omega$, then yielding
$$
\dot{V}_d \le - \ell_\varepsilon V_d + {1\over 4r}|z|^2.
$$
We conclude the boundedness of $V_d = \hal |\Phi(\bfqy)|^2$, as well as $\bfqy$, verifying \textbf{P1}.

We proceed now to prove that the convergence and invariance property \textbf{P2} is also satisfied. Assumption \textbf{T1} guarantees the set $\{ x\in\rea^{2n} ~|~ \phi(x) = 0\},$
which is equal to  the set
$$
\{ x\in\rea^{2n} ~|~ \Phi(\bq_y) = \Phi(h(q)) = 0\},
$$
is attractive and invariant. We write the dynamics of the transverse coordinate $(z,\Phi(y))$ to the desired orbit as follows,
$$
\begin{aligned}
\dot{z} & = \nabla\phi^\top \dot{x}\\
\dot{\Phi} & = - \|\nabla \Phi\|^2_{\calr(\bq_y)} \Phi + \nabla \Phi^\top z.
\end{aligned}
$$
The latter equation is an exponentially stable (scalar) linear time-varying system perturbed by an asymptotically decaying term. Hence,
$$
\lim_{t\to\infty} |(z(t), \Phi(\bq_y(t)))| =0,
$$
validating \textbf{P2}.

Regarding \textbf{P3}, from \eqref{eq:cond_w} it is straightforward to see
$$
\dot{x} \neq 0, \quad \forall x\in \calc_x.
$$
This completes the proof.
\end{proof}
\subsection{Remarks}
\label{subsec53}
%
\noindent {\bf R6} The proposed path following controller is obtained with modifying the I\&I orbital stabilization technique proposed in \cite{ORTrnc}. The main difference relies on that the proposed target dynamics is dependent of the systems state $x$. Although the closed-loop dynamics is autonomous, we do not guarantee its orbital stability. Instead, the obtained orbit is almost periodic.

\vspace{0.3cm}

\noindent {\bf R7}
Assumption \ref{ass:h} ensures that the ``output" $\phi(x)$ has a well-defined vector relative degree $(2,2)$. Hence, the task {\bf T2} can be trivially satisfied with an input-output linearization scheme. As it is well-known that the latter operation is not robust, we leave open the task {\bf T2}, which may be accomplished with another controller.

\vspace{0.3cm}

\noindent {\bf R8}
In view of the remark above, Proposition \ref{prop:main} provides a \emph{constructive} solution to the PFP. That is, first, design a feedback law to stabilize the system \eqref{dyn_z}, where the controller is a function of $w(\xi,x)$, which is to be determined. Then, choose $w(\xi,x)$ to guarantee the boundedness of $(\yn, \dot{\bq}_{N})$. We underscore here that the free mapping $w(\xi,x)$ plays a similar role as the guidance laws \cite{AGUetal,FOS,PALetal} but, in contrast to them, the proposed design is a static feedback law.

\vspace{0.3cm}

\noindent {\bf R9}
If there are more than two inputs, {\em i.e.}, $m >2$, we may change the input as
$$
u = \alpha_{\tt u} (x) + \gamma_{\tt u}(x) u_c,
$$
with two variables of $u_c$ only controlling the off-the-manifold coordinate $z=\phi(x)$, and others affecting $(\yn,\dot{\bq}_{N})$. Such an idea is widely adopted in the literature on transverse feedback linearization, see \cite{CONetal} for an example.

\vspace{0.3cm}

\noindent {\bf R10}
In the transverse feedback linearization approach to the PFP \cite{CONetal,HLAetal,MORetal} guidance laws are not used. Instead, $\Phi(x)$ is selected as the transversal coordinate and the controller is obtained via feedback linearization. The main drawback of this approach is that $\nabla \Phi(x)$ appears in the input matrix, and then in the denominator of the feedback law---yielding an intrinsically local result. Moreover, to guarantee a required orthogonality condition, it is claimed that the tangential coordinates should be calculated via \emph{online optimization}.

\vspace{0.3cm}

\noindent {\bf R11}
In the transverse feedback linearization methods the tangential coordinates, calculated online, are determined by the given geometric path $\Phi(h(q))=0$. In contrast with this, the coordinate $q_N$ in the proposed method---equivalently to the mapping $N(q)$---is determined by the regulated output $\bq_y=h(q)$, but \emph{independent} of the given path.

\vspace{0.3cm}

\noindent {\bf R12}
The following comment, related with the possibility of extending our design design to high-dimensional paths, is in order. Let us assume that the target is to make the position $q\in\rea^3$ follow a three-dimensional Jordan-curve $\calc$ depicted by $\Phi^h_1(q)=0$ and $\Phi^h_2(q)=0$ for some mappings $\Phi_i$ ($i=1,2$). According to the definition of Jordan curve, the path $\calc$ is diffeomorphic to a unit circle, thus we are able to find functions $T_i(\cdot)$ such that $T_1^2(q) + T_2^2(q)=1$. The problem can be solved in our framework selecting the regulated output as $\bq_y=h(q) = \col(T_1(q), T_2(q))$.

%
\section{Examples}
\label{sec6}
%
In this section we present two application examples of the proposed solution to the PFP. The first one is a rather trivial linear system that is chosen to illustrate, in the simplest possible case, the procedure. The second example is the widely studied model of marine surface vessels.
\subsection{A motivating example}
\lab{subsec61}
%
Consider a 3-dof linear underactuated mechanical system \eqref{pH}, with $(q,p) \in \rea^3 \times \rea^3$, $A(q) = I_{3}$, constant $M>0$ and $R\ge 0$, potential energy $U(q) =0$, and constant full-rank $G \in \rea^{3\times2}$. The output is a linear combination of positions as
$$
\bq_y = C q,
$$
where $C \in \rea^{2\times 3}$ is full rank and we define $\bq_N = C^\bot q$. Assuming $\rank\{CM^{-1}G\} = 2$ we satisfy Assumption \ref{ass:h}.

We choose the target dynamics as \eqref{eq:target} with ${\calr = r I_{2}>0}$, $V_d(\xi)$ defined via \eqref{eq:Vd} and we consider the simple case with the mapping $w(\cdot)$ a function of $q$ only. We require that the desired path satisfies Assumption \ref{ass:jordan_curve}.

The next step in the I\&I procedure delineated above is to find the attractive and invariant set \eqref{me} with $\bq_y = \xi$. Thus, we fix the mapping $\pi$ as $\pi(\xi,\bfx) = \col(\xi, C^\bot q, *)$, where ``*'' represents the last three elements to be determined. Then solving the immersion condition \eqref{dot_bfx}, we get
$$
\begin{aligned}
z 
& = CM^{-1}p - L(q,w(q))
\end{aligned}
$$
with
$$
L(q,w):= \begmat{ -r\Phi(Cq) & { w(q)}\\ \\ -{w(q)} & -r \Phi(Cq) } \nabla\Phi(Cq).
$$
We have
$$
\begin{aligned}
\dot{z}
& = CM^{-1} ( - RM^{-1}p +  Gu) - \nabla L \begmat{M^{-1}p \\ \dot{w}}\\
& = CM^{-1} ( - RM^{-1}p +  Gu)  -  (\nabla_q^\top L + \nabla_w^\top L  \nabla w ) M^{-1}p .
\end{aligned}
$$
We can design the feedback linearizing controller as
\begequ
\label{controller:lti}
\begin{aligned}
u  = ( CM^{-1}G  ) ^{-1}
\big[(\nabla_q^\top L + \nabla_w^\top L  \nabla w + CM^{-1}R) M^{-1}p
 - kz
 \big]
\end{aligned}
\endequ
with $k>0$ to ensure
$$
\lim_{t\to\infty} z(t) =0
$$
exponentially fast.

Finally, we need to design $w(q)$ to guarantee the boundedness of $(\bq_N, \dot{\bq}_N)$. We have
\begequ
\label{ddotN}
\begin{aligned}
\ddot{\bq}_N & = C^\bot M^{-1} [ -RM^{-1}p +  G u] \\
& := \alpha_{\tt N}(x,w) + \beta_{\tt N}(x) \dot{w} + \epsilon_t \\
& := \alpha_{\tt N}(x,w) + \beta_{\tt N}(x) \nabla_a^\top L  \nabla w+ \epsilon_t
\end{aligned}
\endequ
with $\epsilon_t$ an exponentially decaying term caused by $z(0)$ and
$$
\begin{aligned}
\beta_{\tt N}(x,w) & :=
- C^\bot M^{-1} \big( R M^{-1} p
+ G  ( CM^{-1}G  ) ^{-1} \times \\
& \quad \quad (\nabla_q^\top L + CM^{-1}R )M^{-1}p \big)
\\
\alpha_{\tt N}(x,w) & :=
C^{\bot} M^{-1} G(CM^{-1}G)^{-1} \nabla_w L.
\end{aligned}
$$
In the dynamics \eqref{ddotN}, the function $w(q)$ can be assigned, which can be regarded an additional control to make $(\bq_N, \dot{\bq}_N)$ bounded. Its design needs to be studied case by case.


\vspace{0.3cm}

\noindent {\bf R13}
If we allow the mapping $w(\cdot)$ be a function of time as well, that is, $w(\xi,x,t)$, we are able to design a {\em dynamic} feedback law by regarding $\dot{w}$ in \eqref{ddotN} as an additional control input, making the construction more flexible. For such a case, the design of the integral action of $\dot w$ may resemble that of the projection variable $\theta$ in Section \ref{sec22}.

\vspace{0.3cm}

Let us consider a simple example with $M=I_{3}$, $R=\diag(0,0,R_3)$, $\calr = {1\over2} I_2$ and
$$
G = \begmat{ 1 & 0 \\ 0 & 1\\ 1 & 0},
~
C = \begmat{1 & 0 & 0\\ 0& 1& 0},
~
\Phi(q) = q_1^2 + q_2^2 - 1.
$$
We get $\dot{q} =p$ and
\begalis{
z&= \col(\dot{q}_1, \dot{q}_2) -L(q)\\
L(q,w) &:= - (q_1^2 + q_2^2 -1)\begmat{q_1 \\ q_2} + 2w \begmat{q_2 \\ -q_1}.
}
The obtained controller is
\begin{equation}
\label{controller:lti2}
  u = \nabla L^\top \dot q - kz.
\end{equation}
At last, we verify the boundedness of $(q_3,\dot{q}_3)$ under the feedback law \eqref{controller:lti2} with any non-zero constant $w$, the dynamics of which is
$$
\begin{aligned}
\dot{q}_3 &:= p_3
\\
\dot{p}_3  &= -R_3 p_3 + \nabla L_1(q_1,q_2) \begmat{\dot{q}_1\\ \dot{q}_2} - kz_1
\end{aligned}
$$
with $L_1(q_1,q_2)$ the first column of $L(q,w)$. Note that $(q_1,q_2)$ exponentially converges to \emph{periodic} signals with \emph{zero means}. It yields that $\nabla L_1(q_1,q_2)\col(\dot{q}_1,\dot{q}_2) = {d\over dt}{L}_1(q_1,q_2)$ also converges a periodic signal with zero mean. Note that the dynamics of $p_3$ is incrementally stable, thus $p_3$ exponentially converges to a periodic signal with zero mean. Using the properties of periodic functions, the anti-derivative function (primitive) of $p_3$ is a periodic signal, yielding the boundedness of $(q_3, \dot{q}_3)$.

Fig. \ref{fig:LTI} (the first three figures) shows the simulation results with the initial condition $q(0)=[2~0.5~1]^\top$, $p(0)= [0~0~1]^\top$, $w= \hal$ and $k=2$. We can observe that the system output converges to the desired path with bounded $(q_3,\dot q_3)$. We compare the proposed path following control with the I\&I orbital stabilization controller in \cite[Section 3.1]{ORTrnc}, namely,
\begin{equation}
\label{controller:lti3}
 u = - \mathbb{J}\col(q_1,q_2) - (\mathbb{J}+I_2) \col(\dot q_1, \dot q_2)
\end{equation}
with
$$
\mathbb{J} = \begmat{0 & 1\\ -1 &0}.
$$
It verifies a defect of the I\&I orbital stabilization controller \eqref{controller:lti3} in \cite{ORTrnc}, as well as appearing in the virtual holonomic constraint method---the steady-state behaviour depending on initial conditions, which is conspicuous by its absence in the proposed path following controller \eqref{controller:lti2}.

\begin{figure}[h]
  \centering
  \includegraphics[width=0.6\textwidth]{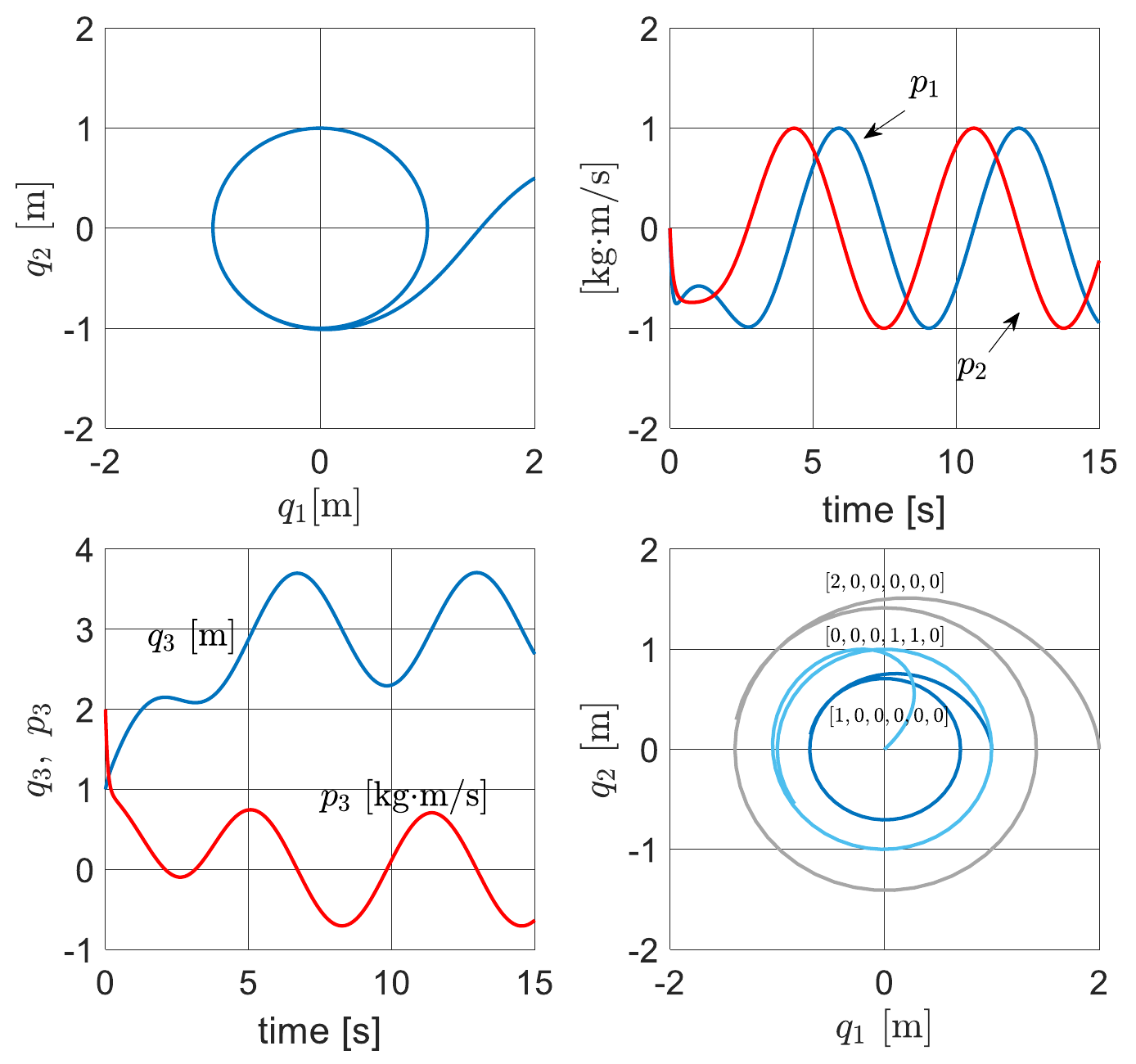}
  \caption{Performance of the path-following controller \eqref{controller:lti} (the first three figures) and the I\&I orbital stabilization controller \eqref{controller:lti3} in \cite{ORTrnc} with different intial conditions (the last figure) }\label{fig:LTI}
\end{figure}

%
\subsection{Marine Surface Vessels}
\lab{subsec62}
%
\subsubsection{Dynamical Model}
\label{sec:example1}

~

\vspace{0.1cm}

The dynamical model of marine surface vessels is given by \eqref{pH}, where $q \in \rea^2 \times \mathbb{S}$ denotes the generalized-position vector and yaw angle in the earth-fixed coordinate. The potential energy $U(q)$ equals to zero, and the matrix $M$ is constant in the body-fixed coordinate. The mappings of  \eqref{pH} are defined as\footnotetext{We consider the case of linear damping. Here, $C(x)$ is the Coriolis-centripetal matrix, and the positive definite matrix $D(x)$ is the damping matrix related with hydrodynamics. We lump these matrices into $R(x)$, which verifies $R(x) + R^\top(x) >0$.} \cite{DONPER}
$$
R(x) = C(q, M^{-1}p) + D,
$$
and
\begequ
\label{MCD}
\begin{aligned}
M & = \begmat{ m_{11} & 0 & 0 \\ 0 & m_{22} & m_{23} \\ 0 & m_{32} & m_{33}  },
~~
D  = \begmat{d_{11} & 0 & 0 \\ 0 & d_{22} & d_{23} \\ 0 & d_{32} & d_{33}}
\\
C(q,v) & = \begmat{ 0 & 0 & -m_{22}v_2 - m_{23} v_3 \\
0 & 0 & m_{11} v_1 \\
m_{22} v_2 + m_{23}v_3 & - m_{11}v_1 & 0
}.
\end{aligned}
\endequ
with ${v} {= M^{-1}p}$ physically representing the velocity in the body frame. The control input $u \in \rea^2$ is generated by the actuators---propellers and rudders. The mapping $A(q) \in {\rm SO(3)}$ is the rotation matrix
\begequ
\label{A}
\begin{aligned}
A(q) & = \begmat{\cos (q_3) & - \sin (q_3) & 0 \\ \sin (q_3) & \cos (q_3) & 0 \\ 0 & 0 & 1}
\\
& := \begmat{\mathbf{A}(q) & 0_{2\times 1} \\ 0_{1\times 2} & 1}.
\end{aligned}
\endequ
 We now take some practical modelling aspects into consideration and make the following assumption \cite{FOS} implying that the position state $q_2$ is underactuated.
\begin{assumption}
\rm\label{ass:b22}
The body-fixed coordinate frame is located at the pivot point such that $M^{-1}Gu = [u_1,0, u_2]^\top$.
\end{assumption}
This is  a standard assumption to ensure, via a suitable selection of the body-fixed coordinate, that $u$ is normalized.

\subsubsection{Constructive Solution}
\label{sec:522}

~

\vspace{0.1cm}

{To circumvent the difficulty of underactuation of $q_2$}, we follow the \emph{hand position} method in \cite{PALetal} by selecting a certain point on the center line of the vessel, then performing the partial change of coordinates \eqref{defbfq}, which in this example takes the form with a free parameter $\ell>0$

\begequ
\label{y}
T(q) = \begmat{~q_1 + \ell \cos(q_3)~ \\~ q_2 + \ell \sin(q_3)~ \\ q_3},
\endequ
clearly satisfying $\rank\{\nabla T(q)\} =3$. {According to Proposition \ref{prop:main}, our task is to design a feedback law guaranteeing {\bf T1} and {\bf T2} for the closed-loop dynamics. We summarise the results of the path following controller design as follows.}
\begin{proposition}
\label{prop:vessel}\rm
Consider the model \eqref{pH} with \eqref{MCD} and \eqref{A}, verifying {Assumption \ref{ass:b22}}. Select $\ell$ such that the inequality
\begin{equation}
\label{ineq:ell}
  \ell > - {d_{33}m_{23} - d_{23}m_{33} \over d_{22}m_{33} - d_{32}m_{23} }
\end{equation}
holds. Then, for any smooth path
$
\{\bq_y \in \rea^2 ~|~\Phi(\bq_y) =0\}
$
satisfying {Assumption \ref{ass:jordan_curve}} with \eqref{y}, there exists a constant $w_0>0$ such that for any $0<|w(\xi,x)|<w_0 $,  the feedback law\footnotetext{The time derivative ${d\over dt}(F(\bq_y,x) \nabla V_d(\bq_y))$, which can be written as a mapping of the system states, is well-posed.}
\begequ
\label{control_vessel}
u = g_z^{-1}(x) \big[
- f_z(x) + {d\over dt} (F(\bq_y,x) \nabla V_d(\bq_y)) - kz
\big]
\endequ
with $k>0$, the mappings $V_d$ and $F$ defined as \eqref{eq:Vd} and \eqref{F}, respectively, the mappings $g_z$ and $f_z$ defined in Appendix, and $z= \phi(x)$ defined as \eqref{app:phi_x}, achieves the path following problem.
\end{proposition}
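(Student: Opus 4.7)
The plan is to invoke Proposition \ref{prop:main} by verifying its two hypotheses \textbf{T1} and \textbf{T2}. As a preliminary, Assumption \ref{ass:h} is satisfied for the hand-position output \eqref{y}: the mapping $T(q)$ is a global diffeomorphism (its Jacobian is block-triangular with a unit $2\times 2$ block), and a direct computation using $M^{-1}Gu = \col(u_1,0,u_2)$ together with the rotation structure of $A(q)$ yields
\begin{equation*}
\nabla h^\top(q)\,A(q)\,M^{-1}G \;=\; \begin{bmatrix} \cos q_3 & -\ell\sin q_3 \\ \sin q_3 & \ell\cos q_3 \end{bmatrix},
\end{equation*}
which has determinant $\ell$ and is therefore full rank for every $\ell>0$. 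The stronger bound \eqref{ineq:ell} will play its role only in the zero-dynamics analysis. Next, \textbf{T1} holds by construction: the mappings $f_z,g_z$ appearing in \eqref{control_vessel} are the drift and input-gain produced when differentiating $z=\dotbfqy - F(\bfqy,x)\nabla V_d(\bfqy)$ along trajectories, so substituting \eqref{control_vessel} cancels $f_z$ and the drift term $\tfrac{d}{dt}(F\nabla V_d)$ exactly, leaving $\dot z=-kz$ with $k>0$. Thus $z(t)\to 0$ exponentially, and this holds \emph{for every} admissible $w(\xi,x)$, leaving the choice of $w$ free for the next step.

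The substantive content lies in \textbf{T2}. Since $q_3\in\bbs$ is automatically bounded, only the yaw rate $v_3=\dot q_3$ needs inspection. Introduce the sway velocity at the hand point, $\eta:=v_2+\ell v_3$; the kinematic identity $\dotbfqy=\mathbf{A}(q)\col(v_1,\eta)$ combined with $\dotbfqy = F(\bfqy,x)\nabla V_d(\bfqy)+z$ shows that $v_1$ and $\eta$ are bounded whenever $\bfqy$ is bounded, which is already established inside the proof of Proposition \ref{prop:main} provided $w$ is bounded. It remains to bound the genuine internal state, the sway velocity $v_2$, governed by the second (unactuated) row of $M\dot v = -(C(q,v)+D)v + Gu$. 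Using $v_3 = (\eta-v_2)/\ell$ to eliminate $v_3$ in that equation produces a scalar ODE of the form
\begin{equation*}
\dot v_2 \;=\; -\,\frac{\ell(d_{22}m_{33}-d_{32}m_{23}) + (d_{33}m_{23}-d_{23}m_{33})}{\ell\,(m_{22}m_{33}-m_{23}m_{32})}\, v_2 \;+\; \psi(t),
\end{equation*}
where $\psi(t)$ gathers bounded contributions from $\eta$, $v_1$, $z$, and the Coriolis cross-terms. Inequality \eqref{ineq:ell} is precisely the condition that the numerator be strictly positive, so the linear part of the sway dynamics is exponentially stable. Selecting $w_0$ small enough ensures that the quadratic Coriolis contributions---proportional to the speed along the curve and hence to $|w|$---do not destroy this stability, so $v_2\in\callinf$, and consequently $v_3=(\eta-v_2)/\ell\in\callinf$.

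With \textbf{T1} and \textbf{T2} established, Proposition \ref{prop:main} delivers \textbf{P1}--\textbf{P3} directly. The main obstacle is the zero-dynamics computation just sketched: one must explicitly execute the partial feedback linearization on the full $3$-DoF model, collect the Coriolis terms quadratic in $(v_1,v_2,v_3)$, substitute $v_3$ via the manifold constraint $v_3=(\eta-v_2)/\ell$, and verify that the resulting coefficient of $v_2$ is exactly the bracket in \eqref{ineq:ell}. The only role of the threshold $w_0$ is to keep the nonlinear cross-terms dominated by that positive linear damping.
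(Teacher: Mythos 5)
Your proposal is correct and follows the same overall strategy as the paper: verify \textbf{T1} trivially, since \eqref{control_vessel} exactly cancels $f_z$ and the feedforward term and yields $\dot z=-kz$, and then reduce \textbf{T2} to a scalar internal dynamics whose damping is guaranteed by \eqref{ineq:ell} and whose perturbing terms are made small by taking $w_0$ small. The difference is which internal coordinate you analyze. The paper works directly with the yaw rate: it writes $\dot v_3=-K(\dot\bq_y,q_3)v_3-\Delta$, where $K$ contains $\delta_4+\delta_2/\ell>0$ (exactly \eqref{ineq:ell}) plus a term proportional to $|\dot\bq_y|$, shows $|\dot\bq_y|$ is eventually of order $w_0$ using $z\to 0$, and concludes forward completeness plus eventual ISS of this linear-in-$v_3$ system. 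You instead exploit the unactuated sway row (whose input channel vanishes by Assumption \ref{ass:b22}), eliminate $v_3$ via $v_3=(\eta-v_2)/\ell$ with $\eta=v_2+\ell v_3$, and obtain a $v_2$-equation with \emph{constant} linear damping $-(\delta_4+\delta_2/\ell)$ perturbed by Coriolis cross-terms proportional to $v_1$; your claimed coefficient is indeed correct (it equals the bracket in \eqref{ineq:ell} in the appendix notation), and $v_3$ is then recovered from $\eta$ and $v_2$. Your route buys a slightly cleaner structure --- the control $u$ does not enter the $v_2$ equation, so you avoid the paper's separate boundedness argument for $u$ inside $\Delta$ --- at the price of the extra change of variables and of actually carrying out the Coriolis bookkeeping, which you only sketch (though the stated coefficient checks out). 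Two small points to tighten: the smallness of $|v_1|\le|\dot\bq_y|$ holds only after a transient (it requires both $z\to0$ and proximity to the path, since off the path $F\nabla V_d$ contains the non-small damping term $\calr\,\Phi\nabla\Phi$), so, as in the paper, you should note that the $v_2$-dynamics is linear in $v_2$ and hence forward complete before that time instant; and your use of boundedness of $\bq_y$ from the proof of Proposition \ref{prop:main} is legitimate (it rests only on \textbf{T1}), but it is worth saying so explicitly to dispel any appearance of circularity with \textbf{T2}.
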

\begin{proof}
\rm
According to Proposition \ref{prop:main}, we split the proof into two parts, verifying the conditions {\bf T1} and {\bf T2}: 1) the convergence of the off-manifold coordinate and 2) the boundedness of $(\yn,\dot{\bq}_{N})$.

1) The off-manifold coordinate $z=\phi(x)$ is defined in \eqref{app:phi_x}, the dynamics of which is
$$
\dot{z} = f_z(x) + g_z(x) u - {d\over dt}(F(\bq_y,x) \nabla V_d(\bq_y))
$$
with $g_z(\cdot)$ and $f_z(\cdot)$ given in Appendix B. Under the proposed controller, we get
$$
\dot{z} = - kz,
$$
thus $|z(t)|$ converging to zero exponentially. We underscore that the feedback law in \eqref{control_vessel} is a function of $\bq_y, ~\dot{\bq}_y$ and $\yn=q_3$, but independent of $\dot{q}_{3}$. Due to $q_3 \in \mathbb{S}$, we conclude $u \in \call_\infty$.

2) By simply selecting $N(q) = q_3$, we guarantee the injectivity of $T(q)$ defined in \eqref{defbfq}, thus $\bq_N = q_3$. According to Proposition \ref{prop:main}, we further need to verify the boundedness of the states $(q_3,\dot{q}_3)$, the dynamics of which is
\begequ
\label{reduced-dyn}
\begin{aligned}
\dot{q}_{3} & = v_{3} \\
\dot{v}_{3} & = f_{\tt N}(q_3, v_{3}, \bq_y, \dot{\bq}_y)
\end{aligned}
\endequ
with
$$
\begin{aligned}
& f_{\tt N}(q_3, v_3, \bq_y, \dot{\bq}_y)
\\
& \hspace{0.4cm} :=
- \bigg[ \bigg(\delta_3 - {\delta_1-1\over \ell}\bigg)|\dot{\bq}_y|^{1\over 2} \cos(q_3 - \psi) + \delta_4 + {\delta_2 \over\ell} \bigg]v_{3}\\
 & \hspace{0.9cm}   - \bigg[ {\delta_3 \over \ell} |\dot{\bq}_y|^{1\over 2}  \cos(q_3-\psi) +{\delta_4\over \ell}\bigg] |\dot{\bq}_y|^{1\over2} \sin(q_3 - \psi)
 + {1\over \ell}[g_z (x) u + f_z(x)] \begmat{-\sin(q_3) \\ \cos(q_3)} \\
& \hspace{0.2cm} \psi  := \text{atan2}(\dot{\bq}_{y1}, \dot{\bq}_{y2}).
\end{aligned}
$$

The boundedness of $q_3$ is obvious due to $q_3 \in \mathbb{S}$. Hence, we only need to prove the boundedness of $\dot{q}_3$. If the system states are on the manifold with $z=0$, then
$$
\begin{aligned}
|\dot{\bq}_y|^{1\over 2} = |w(\bq_y,x)| |\nabla \Phi| \le w_0 \sup_{\Phi(\bq) = 0}|\nabla \Phi| =: w_1
\end{aligned}
$$
is bounded. Since the origin of the dynamics of $z$ is exponentially stable, for any $\varepsilon>0$ there always exists a time instant $T_1>0$ such that
\begequ
\label{bqy<=}
|\dot{\bq}_y(t)|^\hal \le w_1 + \varepsilon, \quad \forall t\ge T_1.
\endequ
We can rewrite the dynamics \eqref{reduced-dyn} of $v_{3}$ in the form
\begequ
\label{v_w}
\dot{v}_{3} = -K(\dot{\bq}_y, q_3) v_{3} - \Delta(\bq_y,\dot{\bq}_y, q_3, v_3)
\endequ
with
$$
\begin{aligned}
K(\dot{\bq}_y, \yn)
& = \bigg(\delta_3 - {\delta_1-1\over \ell}\bigg)|\dot{\bq}_y|^{1\over 2} \cos(q_3 - \psi) + \delta_4 + {\delta_2 \over\ell} \\
 \Delta(\bq,\dot{\bq}_y, q_3, v_3) & =- \bigg[ {\delta_3 \over \ell} |\dot{\bq}_y|^{1\over 2}  \cos(q_3-\psi) +{\delta_4\over \ell}\bigg] |\dot{\bq}_y|^{1\over2} \sin(q_3 - \psi) \\
 & \hspace{0.5cm}
 + {1\over \ell}[g_z (x) u(\bq_y,\dot{\bq}_y,q_3) + f_z(x)] \begmat{-\sin(q_3) \\ \cos(q_3)}  .\\
\end{aligned}
$$

It is clear that $K(\dot{\bq}_y, q_3)$ is bounded, since $|\dot{\bq}_y| \in \call_\infty$ and $|\cos(q_3-\psi)| \le 1$. The inequality \eqref{ineq:ell} guarantees
$$
\delta_4 + {\delta_2 \over\ell} >0,
$$
which is the last two terms of $K(\cdot)$. Invoking \eqref{bqy<=} and $|\cos(q_3-\psi) |\le 1$, we can guarantee, by selecting small $w_0$ and small $\varepsilon$, that
$$
K(\dot{\bq}_y(t), q_3(t)) >0, \quad \forall t \ge T_1.
$$

The scalar ``time-varying'' system \eqref{v_w} is linear in $v_3$, thus forward complete for $ t \ge 0$. After the moment $t= T_1$, this system is input-to-state stable, which together with the fact $\Delta(\cdot) \in \call_\infty$, yields
$
v_{3}(t) \in \call_\infty,
$
completing the proof.
\end{proof}

\noindent {\bf R14}
From a pragmatic viewpoint, we should consider environmental disturbances on vessels, for instance, constant ocean currents. For this case, the kinematic model becomes
$$
\dot{q} = A(q)\nabla_p H(x) + \col(V_c, 0)
$$
with $V_c\in \rea^2$ an unknown constant vector. We may use the integral control \cite{DONJUN,ORTROM}
\begequ
\label{control_vessel2}
\begin{aligned}
u & = g_\ell^{-1}(x) \Bigg[
- f_\ell(x) + {\partial \kappa \over \partial \bq_y}{\partial h \over \partial q}A(q)M^{-1}p
+ {\partial \kappa \over \partial \bq_y}{\partial h\over\partial q_{12}}\theta
-\dot{\theta} -  k_\mathtt{p}z
\Bigg]
\\
\dot{\theta} & =  k_\mathtt{I} \nabla V_d(\bq_y) +  k_\mathtt{I} {\partial \kappa \over \partial \bq_y}{\partial h\over\partial q_{12}}z,
\end{aligned}
\endequ
to solve the PFP, where $q_{12}=\col(q_1,q_2)$ and\footnotetext{Consider the simple case with $w(\cdot)$ only a function of $\xi$.}
$$
z = \nabla h^\top AM^{-1}p - F(h(q))\nabla V_d(h(q)) +\theta,
$$
$k_{\tt I}>0$, and $k_{\tt p}> {1\over 4}w$. Due to the page limitation, we will report the theoretical analysis with disturbance attenuation in other place.

\subsection{Simulation Results}
\label{sec5.4}

In this section, we give the simulation results corresponding to Subsections \ref{sec:522}, using the model of a supply vessel in \cite{CAH}. The desired path is a circle defined by $\Phi(\bq_y) = |\bq_y|^2 - 10^4 =0$, where the regulated output is given by \eqref{y} with $\ell =18$. Since the model does not satisfy the Assumption \ref{ass:b22}, we give the normalized (dimensionless) input $u = M^{-1}B\tau$ with $\tau$ the actual surge thrust and rudder angle. The initial states are selected as $q(0)= [120~~ -90~~0]^\top $ and $p(0)= {0}_3$.

We first consider the case in the absence of ocean currents, with the controller \eqref{control_vessel}, with $k=0.1$, constants $R=10^{-5}$, and $a=0.04$. The performance is given in Fig. \ref{fig:1}, verifying the theoretical analysis in Proposition \ref{prop:vessel}. We then test the controller \eqref{control_vessel} with the ocean current $V_{\tt c}=[2~~1]^\top$ in Fig. \ref{fig:2}, in which the ultimate path is in the neighbourhood of the desired one. This is because exponential orbital stability enjoys some robustness in the presence of bounded perturbation, but reducing the steady-state accuracy.

\begin{figure}[h]
  \centering
  \includegraphics[width=0.6\textwidth]{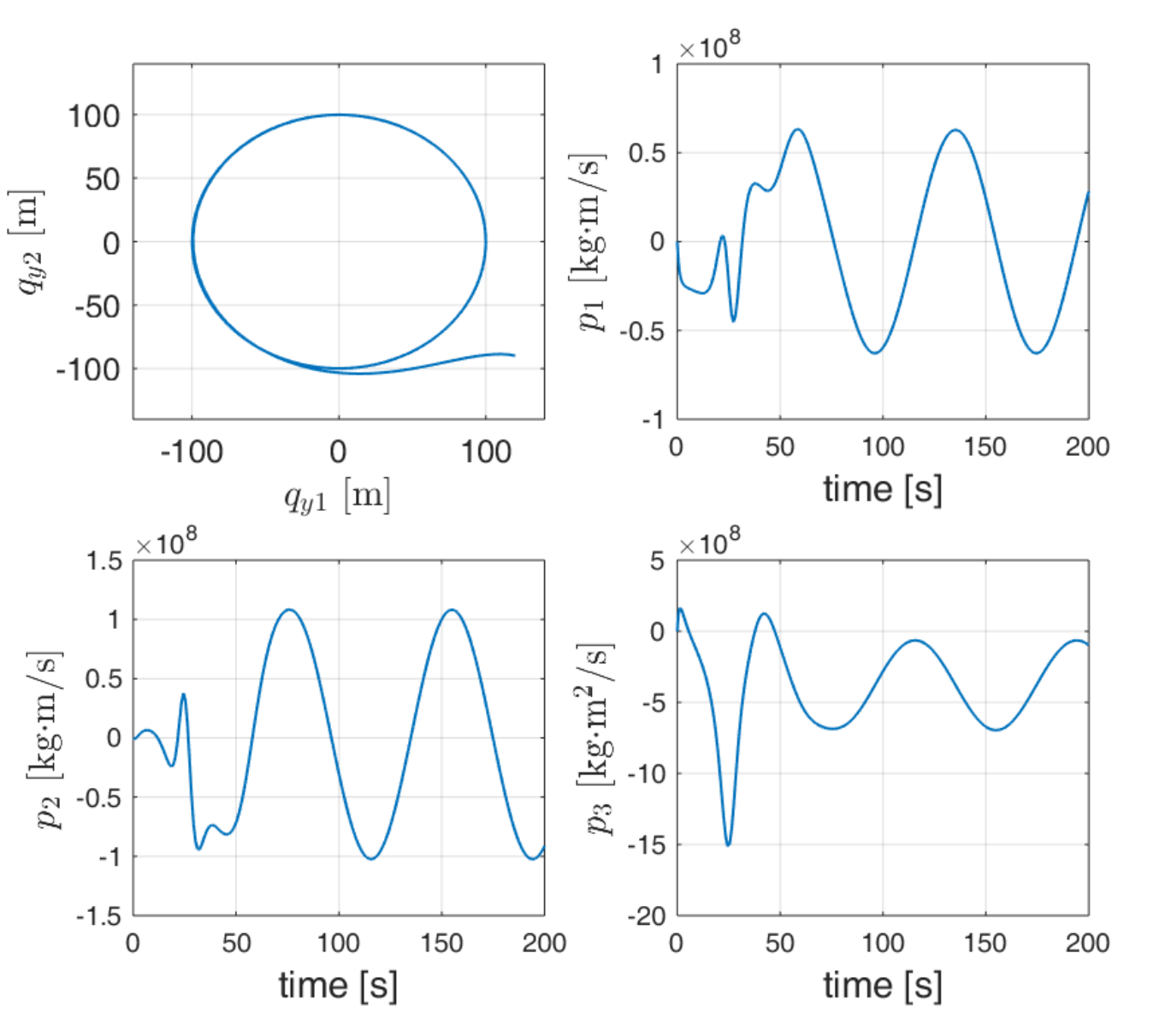}
  \caption{Performance with the controller \eqref{control_vessel} in the absence of ocean currents}\label{fig:1}
\end{figure}

\begin{figure}[h]
  \centering
  \includegraphics[width=0.6\textwidth]{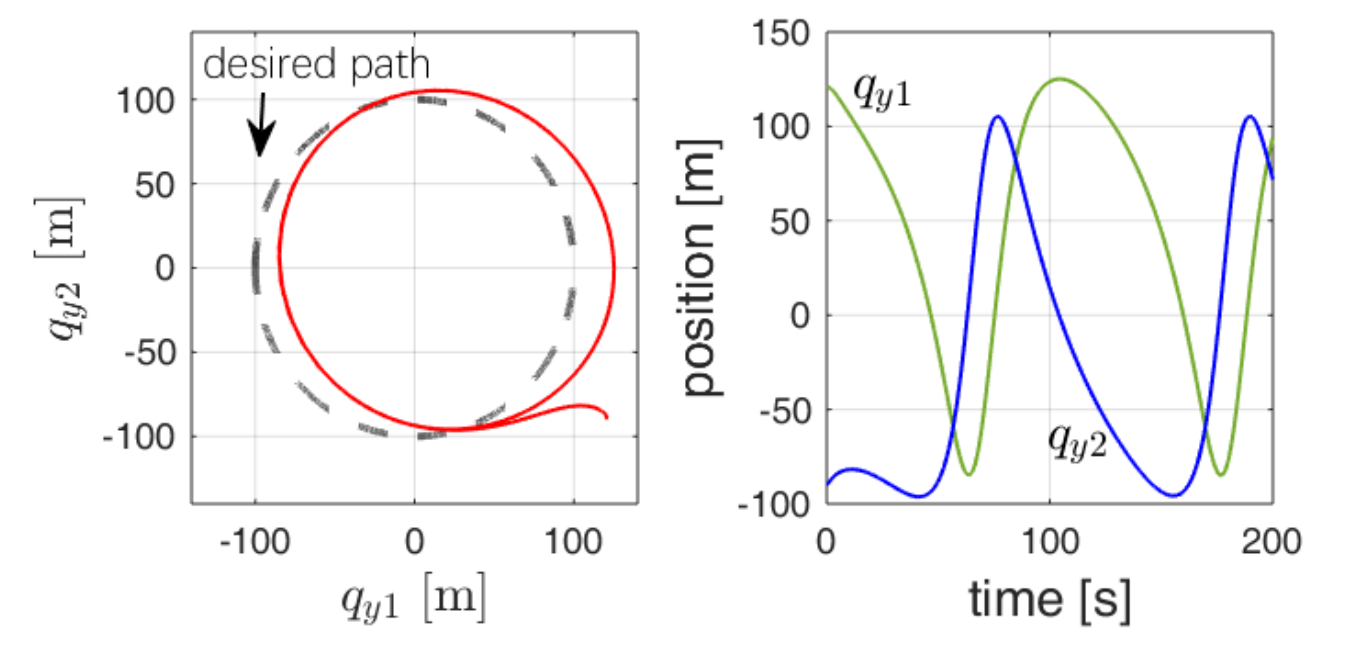}
  \caption{Performance with the controller \eqref{control_vessel} in the presence of ocean currents $V_\mathtt{c}= [2~1]^\top$}\label{fig:2}
\end{figure}

Next, we consider the integral control \eqref{control_vessel2} in the presence of ocean currents $V_{\tt c}=[5~~1]^\top$, with the parameters $k_{\tt p}=0.1,~\theta(0) = [0~~0]^\top$ and $k_{\tt I} = 5\times 10^{-7}$, the simulation results of which are shown in Fig. \ref{fig:3}. We observe that the trajectory ultimately converges to the desired path again without steady-state error.

\begin{figure}[h]
  \centering
  \includegraphics[width=0.6\textwidth]{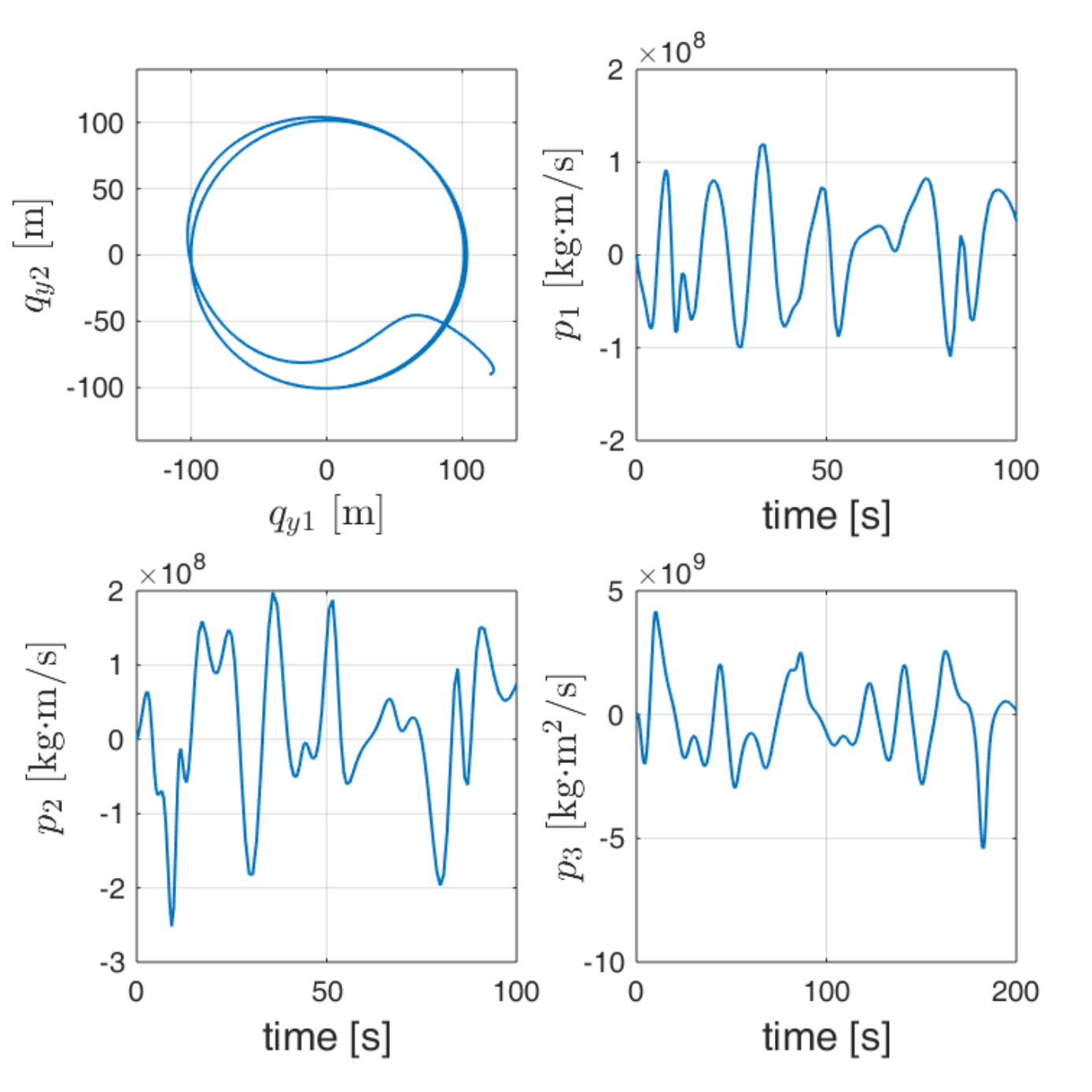}
  \caption{Performance with the dynamic controller \eqref{control_vessel2} in the presence of ocean currents $V_\mathtt{c}= [5~1]^\top$}\label{fig:3}
\end{figure}

%
\section{Concluding Remarks}
\label{sec7}
%
In this paper, we address the PFP of a class of underactuated mechanical systems, which is reformulated as orbital stabilization. We first adopt the Mexican sombrero energy assignment method in \cite{YIetal2} to design a target dynamics with an almost globally attractive limit cycle, {\em i.e.}, the desired path. Unlike \cite{YIetal2}, we do not assume the function $\Phi(\cdot)$ has a unique critical point. Then, we use the I\&I method \cite{ASTORT,ORTrnc} to achieve the task of path following.

The method includes the selection of the mapping $w(\xi,x)$ that can be used to shape the steady state behaviour. We also show, via simulations, that the proposed method can be further robustified by introducing an integral action. The theoretical analysis of the integral action is omitted here for brevity, and will be reported later. Finally, the proposed method is used to design a path following controller for underactuated surface vessels.

The extensions in the following directions are of interests.

\begin{itemize}
\setlength{\itemsep}{5pt}
  \item By adding an suitable integral action to the target oscillator, we are able to attenuate the unmatched external disturbances. More specifically for the marine vessel example, the proposed method is efficient to deal with the unmatched ocean currents.

  \item It is interesting to study the function design of $w(\xi,x)$ to achieve performance enhancement, {\em i.e.}, making $w(\xi,x)$ involve the functions of guidance laws. It is also promising to consider a time-varying $w(\xi,x,t)$ or an adaptation of $\dot{w}$ to get further performance enhancement or simplify the design.
  \item In this paper, we limit our attentions to the planar path, though we have demonstrated the possibility to deal with high-dimensional paths in {\bf R12}, {\em e.g.}, unmanned underwater vehicles (UUV), and robotics.
  \item In some cases, disturbances to marine surface vessels are more complicated, which may be modelled as constant ocean currents with harmonic (first and second-orders) signals. It is reasonable to use finite parameters, including amplitudes, frequencies and phases, to parameterize environmental disturbances, and then adaptively achieve the path following.
\end{itemize}

\appendix

\section*{A. Proof of Proposition \ref{pro3}}
\label{sec:app_proof_lemma}


\noindent {\bf Verification of {\bf A2'}.} In the case when the mapping $\alpha(\xi,\bfx)$ depends on $x$, the immersion condition  \eqref{dot_bfx} takes the form
\begali{
[I_{2n} - \nabla_\bfx \pi^\top(\xi,\bfx)]\big(f(\pi(\xi,\bfx) + g(\pi(\xi,\bfx)) c(\xi,\bfx) \big)
 = \nabla_\xi \pi^\top(\xi,\bfx) \alpha(\xi,\bfx).
\label{fbi2}
}
To obtain the invariant manifold $\calm$ defined in \eqref{eq:M}, we select the first $n$ equations of $\pi(\xi,\bfx)$, denoted as $\pi_{(1,n)}(\cdot)$, are
$$
\begmat{\pi_1(\xi,\bfx) ~\big|&~ \ldots ~\big| & ~\pi_n(\xi,\bfx) }
=
\begmat{ \xi^\top ~&~ \bq_N^\top}
$$
Correspondingly, the first $n$ equations of \eqref{dot_bfx} are $\dot{\bq} = \col(\dot{\xi}, \dot{\bq}_N)$, which can be guaranteed selecting
\begequ
\label{phix}
\pi(\bfx,\xi) =
\left[
\begin{array}{c}
\xi \\ \yn \\
\\
M(T^{\tt I}) \cala^{-1}
\begmat{F(\xi,\bfx)\nabla V_d(\xi) \\ \dotbfqn}
\end{array}
\right]
 \in \rea^{2n}.
\endequ
However, the above equation is only a necessary condition to \eqref{dot_bfx}. We further need to prove the existence of $c(\cdot)$ which guarantees the last $n$ equations of \eqref{dot_bfx}.

We look at the right-hand side of the remaining $n$ equations in \eqref{dot_bfx}, which is
\begalis{
  \dot{\pi}_{(1,n)} 
& =   \sigma(\bfx)
+  M \cala^{-1}  \begmat{{d\over dt}(F\nabla V_d) \\ \ddot{\bq}_N},
}
where we have defined
$$
\sigma(\bfx) := \Big(\sum_{i=1}^{n} (\nabla_{\bq_i}(M\cala^{-1}) )  \cala M^{-1} p e_i \Big)
\cala M^{-1}p
$$
and $e_i$ is the $i$-th Euclidean basis vector.

The left-hand side of the remaining $n$ equations in \eqref{dot_bfx} is
$$
  \dot{p}|_{\bfx = \pi(\xi,\bfx)} = - \cala^\top(\bq) \nabla_\bq \calh - \mathbf{R} M^{-1}p + G(\bq)c(\cdot)
$$
by applying $u=c(\xi,\bfx)$ and using \eqref{pH_trans}. Then, we only need to find $c(\xi,\bfx)$ such that
$
  \dot{p}|_{\bfx = \pi(\xi,\bfx)}  =    \dot{\pi}_{(1,n)}
$
or equivalently,
\begin{equation}\label{eq:dot-q}
\begin{aligned}
   \begmat{{d\over dt}(F\nabla V_d) \\ \ddot{\bq}_N} + \cala^{-1}M
\Big(
    \cala^\top \nabla_\bq \calh + \mathbf{R}M^{-1}p + \sigma(\bfx)
\Big)
=
\cala M^{-1} G(q) c(\xi,\bfx).
\end{aligned}
\end{equation}
Multiplying the full-rank matrix $I_{n\times n} = \col(\mathbb{I} , \mathbb{I}^\bot)$ to the both sides, we get the first 2 rows and the last $(n-2)$ rows of \eqref{eq:dot-q}.
%
The first 2 rows can be used to obtain the controller the mapping $c(\cdot)$ given by
$$
\begin{aligned}
c(\xi,\bfx)   =
[\nabla h^\top A M^{-1}G ]^\dagger
%
\Bigg[
{d\over dt}(F\nabla V_d)
 +
\mathbb{I} \cala^{-1}M
\Big(
\cala^\top \nabla_\bq \calh
+ {\bf R} M^{-1}p +
\sigma(\bfx)
\Big)
\Bigg]
\end{aligned}
$$
where $\mathbb{I} := [I_{2\times 2} ~~0_{2\times(n-2)}]$, $\yn$ is defined in \eqref{defbfq} and observing that $ {\mathbb{I}\cala M^{-1}G = [\nabla h^\top AM^{-1}G ]}$ is a full-rank fat matrix, due to Assumption \ref{ass:h}---hence, its pseudoinverse is well defined, and {we have also used the relations
 $
 \cala = \nabla T^\top A
 $
 and $\mathbb{I} \nabla T^\top = \nabla h^\top$. }The remaining equations multiplied by $\mathbb{I}^\bot$ is nothing, but just ${d\over dt} (\dot{\bq}_N) =\ddot{\bq}_N$, which is true automatically.

\noindent {\bf Verification of {\bf A3'}.} The implicit manifold condition {\bf A3'} in the original coordinate $x=\col(q,p)$ of the dynamics \eqref{pH} takes the form
\begequ
\label{phi(x)}
\phi(x) =0 \quad \Leftrightarrow \quad
\bfx = \pi(\bfx,\xi).
\endequ
Some simple calculations prove that the equivalence holds with
\begequ
\label{app:phi_x}
\phi(x) = \nabla h^\top AM^{-1}p - F(h(q),x) \nabla V_d(h(q)).
\endequ

\section*{B. Mappings in the vessel example}
\label{sec:mappings_ship}

We give the mappings in the example of marine surface vessels as follows \cite{PALetal}.
\begequ
\label{mappings_vessel}
\begin{aligned}
  g_z(x) & = \begmat{\cos(q_3) & - \ell \sin(q_3) \\ \sin(q_3) & \ell \cos(q_3) },
  \quad
  f_z(x) = \mathbf{A}(x) \begmat{F_1(x) - v_2v_3 - \ell v_3^2 \\ v_1v_3 + F_2(x)v_3 + F_3(x)v_2 + F_4(x)\ell}
\\
  m_0 & = m_{22}m_{33} - m_{23}^2
\\
  F_1(x) & = {1\over m_{11}} (m_{22}v_2 + m_{23}v_3) v_3  - {d_{11} \over m_{11}}v_1
\\
  F_2(x) & = {1\over m_0} \Big( ( m_{23}^2 - m_{11}m_{33} )v_1 + (d_{33}m_{23} - d_{23}m_{33})     \Big)
\\
  F_3(x) & = {1\over m_0} \Big( (m_{22} - m_{11}) m_{23}v_1 - (d_{22}m_{33} - d_{32}m_{23})  \Big)
\\
  F_4(x) & = {1\over m_0} \Big( \big( m_{23}d_{22} - m_{22}(d_{32} + (m_{22} - m_{11})v_1 \big)\big)v_2 \\
    & \quad\quad
    + \big( m_{23}(d_{23} + m_{11}v_1) - m_{22}(d_{33} + m_{23}v_1) \big)v_3 \Big)
\end{aligned}
\endequ
with
$
{v  = M^{-1}p}.
$
Some parameters are defined as
$
\delta_1  := {1\over m_0} (m_{11}m_{33}- m_{23}^2), ~
 \delta_2  := {1\over m_0} (d_{33}m_{23} - d_{23}m_{33}),~
\delta_3  := {1\over m_0} (m_{11} - m_{22})m_{23},
$
and
$
 \delta_4  := {1\over m_0} (d_{22}m_{33} - d_{32}m_{23}).
$

\end{document}